\newtheorem{theorem}{Theorem}[section]
\newtheorem{corollary}[theorem]{Corollary}
\newtheorem*{claim}{Claim}
\theoremstyle{definition}
\newtheorem{definition}[theorem]{Definition}
\newtheorem{problem}[theorem]{Problem}
\tikzset{external/system call={lualatex \tikzexternalcheckshellescape -halt-on-error -interaction=batchmode -jobname "\image" "\texsource"}}
\newenvironment{parameterizedproblem}%
  {%
    \leavevmode\nobreak\par
    \begin{list}%
      {}%
      {%
        \def\labelstyle{\itshape}
        \setlength{\topsep}{0pt}%
        \settowidth{\labelwidth}{\labelstyle Parameter:}%
        \setlength{\leftmargin}{\labelwidth}%
        \addtolength{\leftmargin}{\labelsep}%
        \setlength{\itemsep}{0pt}%
        \setlength{\parsep}{0pt}%
      }%
      \begin{samepage}%
        \def\instance{\item[\labelstyle Instance:]}%
        \def\parameter{\item[\labelstyle Parameter:]}%
        \def\question{\item[\labelstyle Question:]}%
  }%
  {%
    \end{samepage}%
    \end{list}%
  }
\newcommand\Class[1]{\mathchoice{\text{\normalfont\small$\mathrm{#1}$}}{\text{\normalfont\small$\mathrm{#1}$}}{\text{\normalfont$\mathrm{#1}$}}{\text{\normalfont$\mathrm{#1}$}}}     
\newcommand{\Lang}[1]{\ifmmode{\text{\textsc{#1}}}\else\textsc{#1}\fi}
\newcommand\PLang[1][]{p\def\test{#1}\ifx\test\stockhustantauempty\else_{\mathrm{#1}}\fi\text-\penalty15\Lang}
\def\stockhustantauempty{}
\newcommand\Para[1][\text-]{\Class{para#1}}
\newcommand\PLangText[1][]{$p\def\test{#1}\ifx\test\empty\else_{\mathrm{#1}}\penalty15\fi$-\penalty15\hskip0pt\textsc}
\title{%
  Completeness Results for Parameterized~Space~Classes
}
\author{%
  Christoph Stockhusen\and
  Till Tantau
}
\date{%
  Institute for Theoretical Computer Science\\
  Universit\"at zu L\"ubeck\\
  D-23538 L\"ubeck, Germany\\
  \texttt{\{stockhus,tantau\}@tcs.uni-luebeck.de}
}
\begin{document}
  
\maketitle

\begin{abstract}
  The parameterized complexity of a problem is generally considered
  ``settled'' once it has been shown to lie in FPT or to be complete
  for a class in the W-hierarchy or a similar parameterized hierarchy.
  Several natural parameterized problems have, however, resisted such
  a classification. At least in some cases, the reason is that upper
  and lower bounds for their parameterized \emph{space} complexity
  have recently been obtained that rule out completeness results for
  parameterized \emph{time} classes. In this paper, we make progress
  in this direction by proving that the associative generability
  problem and the longest common subsequence problem  are complete for
  parameterized space classes. These classes are defined in terms of
  different forms of bounded nondeterminism and in terms of
  simultaneous time--space bounds.  As a technical tool we introduce a
  ``union operation'' that translates between problems complete for
  classical complexity classes and for W-classes.
\end{abstract}

\section{Introduction}

Parameterization has become a powerful paradigm in complexity theory,
both in theory and practice. Instead of just considering the runtime
of an algorithm as a function of the input \emph{length}, we analyse
the runtime as a multivariate function depending on a number of
different problem \emph{parameters}, the input length being just one
of them. While in classical complexity theory instead of ``runtime''
many other resource bounds have been studied in great detail, in the
parameterized world the focus has lain almost entirely on time
complexity. This changed when in a number of
papers~\cite{Caietal1997,ElberfeldST2012,Guillemot2011} it was 
shown for different natural problems, including the
vertex cover problem, the feedback vertex set problem, and the longest
common subsequence problem, that their parameterized \emph{space}
complexity is of interest. Indeed, the parameterized space
complexity of natural problems can explain why some problems in
$\Class{FPT}$ are easier to solve than others (namely, because they
lie in much smaller space classes) and why some problems cannot be
classified as complete for levels of the weft hierarchy (namely,
because upper and lower bounds on their space complexity rule out such
completeness results unless unlikely collapses occur).

\paragraph{Our Contributions.}

In the present paper, we present completeness results of natural
parameterized problems for different parameterized space complexity classes. The
classes we study are of two kinds: First, parameterized classes of
\emph{bounded nondeterminism} and, second, parameterized classes where
the \emph{space and time} resources of the machines are bounded
\emph{simultaneously.} In both cases, we introduce the classes  
for systematic reasons, but also because they are needed to classify
the complexity of the natural problems that we are interested in.

In the context of bounded nondeterminism, we introduce a general
``union operation'' that turns any language into a parameterized
problem in such a way that completeness of the language for some
complexity class $C$ carries over to completeness of the parameterized
problem for a class ``$\Para[W]C$,'' which we will define rigorously
later. Building on this result, we show that many union versions of
graph problems are complete for $\Para[W]\Class{NL}$ and
$\Para[W]\Class L$, but the theorem can also be used to show that
$\PLang{weighted-sat}$ is complete for $\Para[W]\Class{NC}^1$. Our
technically most challenging result is that the associative
generability problem parameterized by the generator set size is
complete for the class $\Para[WN]\Class{L}$.

Regarding time--space classes, we present different problems that are
complete for the class of problems solvable ``nondeterministically in
fixed-parameter time and slice-wise logarithmic space.'' Among these
problems are the longest common subsequence problem parameterized by
the number of strings, but also the acceptance problem for certain
cellular automata parameterized by the number of cells and also a
simple but useful pebble game.

\paragraph{Related Work.}

Early work on parameterized space classes is due to Cai et
al.~\cite{Caietal1997} who introduced the classes $\Para\Class L$ 
and $\Para\Class{NL}$, albeit under different names, and showed that
several important problems in $\Class{FPT}$ lie in these classes: the
parameterized vertex cover problem lies in $\Para\Class L$ and the 
parameterized $k$-leaf spanning tree problem lies in
$\Para\Class{NL}$. Later, Flum and Grohe~\cite{FlumG2003} showed that
the parameterized model checking problem of first-order formulas on
graphs of bounded degree lies in~$\Para\Class L$. In particular,
standard parameterized graph problems belong to $\Para\Class L$ when
we restrict attention to bounded-degree graphs. Recently,
Guillemot~\cite{Guillemot2011} showed that the longest common
subsequence problem (\textsc{lcs}) is equivalent under fpt-reductions
to the short halting problem for \textsc{ntm}s, where the time and
space bounds are part of the input and the space bound is the
parameter. Our results differ from Guillemot's insofar as we use
weaker reductions (para-$\Class L$- rather than fpt-reductions) and
prove completeness for a class defined using a machine model rather
than for a class defined as a reduction closure. The paper
\cite{ElberfeldST2012} by Elberfeld and us  is similar to the present
paper insofar as it also introduces new parameterized space complexity
classes and presents upper and lower bounds for natural parameterized
problems. The core difference is that in the present paper we focus on
completeness results for natural problems rather than ``just'' on
upper and lower bounds.

\paragraph{Organisation of This Paper.}

In Section~\ref{classes} we review the parameterized space
classes previously studied in the literature and introduce some new
classes that will be needed in the later sections. For some of the
classes from the literature we propose new names in order to
systematise the naming and to make connections between the different
classes easier to spot. In Section~\ref{section-bounded-non} we study
problems complete for classes defined in terms of bounded
nondeterminism, in Section~\ref{section-space-time} we do the same for
time--space classes. 

Due to lack of space, all proofs have been omitted. They can be found
in the technical report version of this paper.

\section{Parameterized Space Classes}

\label{classes}

Before we turn our attention to parameterized \emph{space} classes, let
us first review some basic terminology. As 
in~\cite{ElberfeldST2012}, we define a \emph{parameterized problem} as
a pair~$(Q,\kappa)$ of a language $Q \subseteq \Sigma^*$ and a
parameterization $\kappa\colon\Sigma^*\to\mathbb N$ that maps input
instances to parameter values and that 
is computable in logarithmic space.\footnote{In the classical
definition, Downey and Fellows~\cite{DowneyF1999} just require the 
parameterization to be computable, while Flum and
Grohe~\cite{FlumG2006} require it to be computable in polynomial
time. Whenever the parameter is part of the input, it is certainly
computable in logarithmic space.} For a classical complexity
class~$C$, a parameterized problem $(Q,\kappa)$ belongs to the
\emph{para-class} $\Para C$ if there are an alphabet~$\Pi$, a
computable function $\pi\colon\mathbb N\to\Pi^*$, and a language
$A\subseteq\Sigma^*\times\Pi^*$ with $A\in C$ such that for all $x\in
\Sigma^*$ we have $x\in Q\iff\big(x,\pi\big(\kappa(x)\big)\big)\in A$.
The problem is in the \emph{X-class} $\Class{X}C$ if for every number
$w \in \mathbb N$ the slice $Q_w=\{\,x\mid\text{$x\in Q$ and
  $\kappa(x)=w$}\}$ lies in~$C$. It is immediate from the definition
that $\Para C \subseteq \Class XC$ holds.

The ``popular'' class $\Class{FPT}$ is the same as $\Para\Class P$. In
terms of the $O$-notation, a parameterized problem $(Q,\kappa)$ is in
$\Para\Class P$ if there is a function $f\colon\mathbb N\to\mathbb N$
such that the question $x\in Q$ can be decided within time
$f(\kappa(x)) \cdot |x|^{O(1)}$. By comparison, $(Q,\kappa)$ is in
$\Para\Class L$ if $x\in Q$ can be decided within space
$f(\kappa(x)) + O(\log|x|)$; and for $\Para\Class{PSPACE}$ the
space requirement is $f(\kappa(x)) \cdot |x|^{O(1)}$. The
class $\Class{XP}$ is in wide use in parameterized complexity theory;
the logarithmic space classes $\Class{XL}$ and $\Class{XNL}$ have
previously been studied by Chen et
al.~\hbox{\cite{FlumG2003,ChenFG2003}}.

To simplify the notation, let us write $f_x$~for
$f\big(\kappa(x)\big)$ and $n$~for $|x|$ in the following. Then the
time bound for $\Para\Class P$ can be written as $f_x n^{O(1)}$ and
the space bound for $\Para\Class L$ as $f_x + O(\log n)$. 

\emph{Parameterized logspace reductions} ($\Para\Class L$-reductions)
are the natural restriction of fpt-reductions to logarithmic space: A
$\Para\Class L$-reduction from a parameterized problem
$(Q_1,\kappa_1)$ to $(Q_2,\kappa_2)$ is a mapping
$r\colon\Sigma_1^*\to\Sigma_2^*$ such~that 
\begin{enumerate}
\item for all $x\in\Sigma_1^*$ we have $x\in Q_1\iff r(x)\in Q_2$,
\item $\kappa_2\big(r(x)\big)\leq g\big(\kappa_1(x)\big)$ for some
  computable function~$g$, and,
\item $r$ is $\Para\Class L$-computable with respect to $\kappa_1$
  (that is, there is a Turing machine that outputs $r(x)$ on input $x$
  and needs space at most $f(\kappa_1(x))+O(\log |x|)$ for some
  computable function $f$).  
\end{enumerate}
Using standard arguments one can show that all classes in this paper are 
closed with respect to $\Para\Class L$-reductions; with the possible
exception of $\Para[W]\Class{NC}^1$, a class we encounter in
Theorem~\ref{theorem-nc1}. Throughout this paper, all completeness and
hardness results are meant with respect to $\Para\Class L$-reductions.

\subsection{Parameterized Bounded Nondeterminism}

\label{section-def-bounded-nondet}

While the interplay of nondeterminism and parameterized space may seem
to be simple at first sight ($\Class{NL}$ is closed under complement
and $\Class{NPSPACE}$ is even equal to $\Class{PSPACE}$, so only
$\Class{XNL}$ and $\Para\Class{NL}$ appear interesting), a closer look
reveals that useful and interesting new classes arise when we bound
the amount of nondeterminism used by machines in dependence on the
parameter. For this, it is useful to view nondeterministic
computations as deterministic computations using ``choice tapes'' or
``tapes filled with nondeterministic bits.'' These are extra tapes for
a deterministic Turing machine, and an input word is accepted if there
is at least one bitstring that we can place on this extra tape at the
beginning of the computation such that the Turing machine accepts. It
is well known that $\Class{NP}$ and $\Class{NL}$ can be defined in
this way using deterministic polynomial-time or logarithmic-space
machines, respectively, that have \emph{one-way} access to a choice
tape. (For $\Class{NP}$ it makes no difference whether we have one- or
two-way access, but logspace \textsc{dtm}s with access to a two-way
choice tape can accept all of $\Class{NP}$.)

Classes of \emph{bounded} nondeterminism arise when we restrict the
length of the bitstrings on the choice tape. For instance, the classes
$\beta^h$ for $h \ge 1$, see~\cite{KintalaF1980} and
also~\cite{BussG1993} for variants, are defined in the same way as
$\Class{NP}$ above, only the length of the bitstring on the choice 
tape may be at most $O(\log^h n)$. Classes of \emph{parameterized
  bounded} nondeterminism arise when we restrict the length the
bitstring on the choice tape in dependence not only on the input
length, but also of the parameter. Furthermore, in the context of
bounded space computations, it also makes a difference whether we have
one-way or two-way access to the choice tapes.

\begin{definition}
  Let $C$ be a complexity class defined in terms of a deterministic
  Turing machine model (like $\Class L$ or $\Class P$). We define
  $\Para[\exists^{\leftrightarrow}]C$ as the class of parameterized
  problems $(Q, \kappa)$ for which there exists a $C$-machine $M$, an
  alphabet $\Pi$, and a computable function $\pi\colon\mathbb
  N\to\Pi^*$ such that: For every $x \in \Sigma^*$ we have $x \in Q$
  if, and only if, there exists a bitstring $b \in \{0,1\}^*$ such
  that $M$ accepts with $(x,\pi(\kappa(x)))$ on its input tape and
  $b$ on the two-way choice tape. 
  We define $\Para[\exists^{\to}]C$ similarly, only access to
  the choice tape is now one-way.

  We define $\Para[\exists^{\leftrightarrow}_{\mathit f\!\log}]C$ and
  $\Para[\exists^{\to}_{\mathit f\!\log}]C$ in the same way, but the
  length of $b$ may be at most $|\pi(\kappa(x))| \cdot O(\log n)$.
\end{definition}

\begin{figure}[htpb]
  \def\extra#1{\textcolor{black!50}{#1}}
  \tikz [baseline,xscale=1.1,yscale=1.2,trim left=-10cm,
         red/.style={text=red!75!black},
         blue/.style={text=blue!75!black}] {
    \graph [graph, no placement, nodes={inner sep=1.5pt,anchor=mid}] {
      paraNC1  / {$\Para\Class{NC}^1$} [x=0, y=-1];
      
      paraL  / {$\Para\Class L$\extra{\rlap{${}=\Class D[\infty,f{+}\log]$}}}       
        [x=0, y=0];
        
      paraNL / {$\Para\Class{NL}$\extra{\rlap{${}=\Class  N[\infty,f{+}\log]$}}}
        [x=0, y=1.5];
        
      paraP / {$\Para \Class P$\rlap{\smash{\vtop{${}=\Class{FPT}$\\
                      \extra{${}=\Class D[f\operatorname{poly},\infty]$}}}}}
        [x=0, y=3];
        
      paraNP / {$\Para\Class{NP}$\extra{\rlap{${}=\Class N[f\operatorname{poly},\infty]$}}}
        [x=0, y=4.5];
        
      paraPSPACE / {$\Para\Class{PSPACE}$\extra{\rlap{\smash{%
            \vtop{${}=\Class D[\infty,f\operatorname{poly}]$\\
              ${}=\Class N[\infty,f\operatorname{poly}]$}}}}}
        [x=0, y=7.5];
    
      parabetaL / {$\Para[\beta]\Class L$}              [x=0, y=0.75,red];
      parabetaP / {$\Para[\beta]\Class P$}              [x=0, y=3.75,red];
    
      paraWNC1 / {$\Para[W]\Class{NC}^1$}               [x=3.3, y=-0.5,red];
      paraWL / {$\Para[W]\Class L$}                     [x=3.3, y=0.75,red];
      paraWNL / {$\Para[W]\Class{NL}$}                  [x=3.3, y=2.25,red];
      paraWP / {$\Para[W]\Class P$}                     [x=3.3, y=3.75,red];
    
      XL / {$\Class{XL}$}                               [x=4.7, y=3];
      XNL / {$\Class{XNL}$}                             [x=4.7, y=4.5];
      XP / {$\Class{XP}$}                               [x=4.7,y=6];
      XNP / {$\Class{XNP}$}                             [x=4.7,y=7.5];

      paraNC1 <- paraL <- parabetaL <- paraNL <- paraP <- parabetaP <- paraNP <- paraPSPACE;
      XL <- XNL <- XP <- XNP;
      
      parabetaL <- paraWL;
      parabetaP --[draw=none,"${}=\Class W[\Class P]={}$" red] paraWP;
      
      paraNL <- paraWNL;
      paraNC1 <- paraWNC1 <- paraWL <- paraWNL <- paraWP;
      { paraWL, paraWNL, paraWP} <- {XL, XNL, XP};
      XNP -> paraNP;

      XL2 / {\extra{\llap{$\Class{D}[\infty,f\log]={}$}}$\Class{XL}$}
        [x=-6.35, y=3];
      XNL2 / {\extra{\llap{$\Class{N}[\infty,f\log]={}$}}$\Class{XNL}$}   
        [x=-6.35, y=4.5];
      XP2 / {\extra{\llap{$\Class{D}[n^f,\infty]={}$}}$\Class{XP}$}
        [x=-6.35,y=6];
      XNP2 / {\extra{\llap{$\Class{N}[n^f,\infty]={}$}}$\Class{XNP}$}
        [x=-6.35,y=7.5];

      paraL <- parabetaL <- paraNL <- paraP <- parabetaP <- paraNP <- paraPSPACE;
      XL <- XNL <- XP <- XNP;
      
      paraPiXL / {$\Para\Class P \cap\Class{XL}$}      [x=-4.5, y=2.25];
      paraNPiXNL / {$\Para\Class{NP} \cap\Class{XNL}$} [x=-4.5, y=3.75];
      paraXPiPSPACE / {$\Class{XP} \cap\Para\Class{PSPACE} $}      [x=-4.5, y=5.23];
      paraXNPiPSPACE / {$\Class{XNP} \cap\Para\Class{PSPACE} $}      [x=-4.5, y=6.75];
      
      ftpxl / {$\Class D[f \operatorname{poly},f\log]$} [x=-2.75, y=1.5,blue];
      nftpxl / {$\Class N[f \operatorname{poly},f\log]$} [x=-2.75, y=3,blue];
      
      fpspacexp / {$\Class D[\smash{n^f},f\operatorname{poly}]$} [x=-2.75, y=4.5,blue];
      nfpspacexp / {$\Class N[\smash{n^f},f\operatorname{poly}]$} [x=-2.75, y=6,blue];

      paraXNPiPSPACE -> paraXPiPSPACE -> paraNPiXNL -> paraPiXL; 
      
      nfpspacexp -> fpspacexp, nftpxl -> ftpxl;
      
      {XL2, paraP} -> paraPiXL -> ftpxl -> paraL;
      {XNL2, paraNP} -> paraNPiXNL -> nftpxl -> paraNL;
      
      {XP2} -> paraXPiPSPACE -> fpspacexp -> paraP;
      {XNP2, paraPSPACE} -> paraXNPiPSPACE -> nfpspacexp -> paraNP;
      
      paraPSPACE ->[out=-105,in=10] paraXPiPSPACE;
      
      XL2 <- XNL2 <- XP2 <- XNP2;      
    };
    \path (-9,0);
    }
  \caption{In this inclusion diagram bounded nondeterminism classes
    are shown in red and time--space classes in blue. The X-classes
    are shown twice to keep the diagram readable. All known inclusions
    are indicated, where $C \to D$ means $C \supseteq D$.}
  \label{fig-inclusions}
\end{figure}

Observe that, as argued earlier,
$\Para[\exists^{\leftrightarrow}]\Class L =
\Para[\exists^{\leftrightarrow}]\Class P = \Para[\exists^{\to}] \Class
P = \Para\Class{NP}$ and $\Para[\exists^{\to}]\Class L =
\Para\Class{NL}$.  Also observe that 
$\Para[\exists^{\leftrightarrow}_{\mathit f log}] \Class P =
\Para[\exists^{\to}_{\mathit f log}] \Class P = \Class W [ \Class P ]$
by one of the many possible definitions of~$\Class W[\Class P]$.

The above definition can easily be extended to the case where a
universal quantifier is used instead of an existential one and where
\emph{sequences} of quantifiers are used. This is interpreted in the
usual way as having a choice tape for each quantifier and the
different ``exists \dots\,for all''-conditions must be met in the
order the quantifiers appear. For instance, for problems in
$\Para[\exists^{\leftrightarrow}_{\mathit
  f\!\log}\exists^{\to}]\Class L$ we have $x 
\in Q$ if, and only if, there exists a bitstring of length
$f_x\log_2 n$ for the first, two-way-readable choice
tape for which an $\Class{NL}$-machine accepts.
The classes $\Class{para\text-NL}[f\log]$,
$\Class{para\text-L\text-cert}$, and $\Class{para\text-NL\text-cert}$
introduced in an ad hoc manner by Elberfeld et al.\ in
\cite{ElberfeldST2012} can now be represented systematically: They are 
$\Para[\exists^{\to}_{\mathit f\!\log}]\Class L$,
$\Para[\exists^{\leftrightarrow}_{\mathit f\!\log}]\Class L$, and  
$\Para[\exists^{\leftrightarrow}_{\mathit f\!\log}\exists^{\to}]\Class
L$, respectively.

In order to make the notation more useful in practice, instead of
``$\exists^{\to}$'' let us write ``$\Class{N}$'' and instead of
``$\exists^{\to}_{\mathit   f\!\log}$'' we write ``$\beta$'' as is
customary. As a new notation, instead of
``$\exists^{\leftrightarrow}_{\mathit  f\!\log}$'' and
``$\forall^{\leftrightarrow}_{\mathit  f\!\log}$''  we write
``$\Class{W}$'' and ``$\Class{W_\forall}$,'' respectively. The three
classes of \cite{ElberfeldST2012} now become $\Para[\beta] 
\Class L$, $\Para[W] \Class L$, and $\Para[W N] \Class L$.

Our reasons for using ``W'' to denote
$\exists^{\leftrightarrow}_{\mathit f\!\log}$ will be explained fully in
Section~\ref{section-justification-w}; for the moment just observe
that $\Class W[\Class P] = \Para[W] \Class P$ holds.
To get a better intuition on the W-operator, note that it provides
machines with ``$f_x \log_2 n$ bits of nondeterministic information''
or, equivalently, with ``$f_x$ many nondeterministic positions in the
input'' and these bits are provided as part of the input. This allows
us to also apply the W-operator to classes like $\Class{NC}^1$ that
are not defined in terms of Turing machines.

The right half of Figure~\ref{fig-inclusions} depicts the known
inclusions between the introduced classes, the left half shows the
classes introduced next.

\subsection{Parameterized Time--Space Classes}
\label{section-intro-ftpxp}

In classical complexity theory, the major complexity classes are
either defined in terms of time complexity ($\Class P$, $\Class{NP}$,
$\Class{EXP}$) or in terms of space complexity ($\Class L$, $\Class{NL}$,
$\Class{PSPACE}$), but not both at the same time: by the
well-known inclusion chain $\Class L \subseteq
\Class{NL} \subseteq \Class P \subseteq \Class{NP} \subseteq
\Class{PSPACE} = \Class{NPSPACE} \subseteq \Class {EXP}$ 
space and time are intertwined in such a way that bounding either 
automatically bounds the other in a specific way (at least for the
major complexity classes).
In the parameterized world, interesting new classes arise when we restrict
time and space simultaneously: namely whenever the time is
``para-restricted'' while space is ``X-restricted'' or \emph{vice versa.}

\begin{definition}
  For a space bound $s$ and a time bound $t$, both of which may depend
  on a parameter $k$ and the input length $n$, let $\Class D[t,s]$ 
  denote the class of all parameterized problems that can be accepted
  by a deterministic Turing machine in time $t(\kappa(x),|x|)$ and
  space $s(\kappa(x),|x|)$. Let $\Class N[t,s]$ denote the
  corresponding nondeterministic class.
\end{definition}

Four cases are of interest: First, $\Class{D}[f \mathrm{poly},
f\log]$, meaning that $t(k,n) = f(k) \cdot n^{O(1)}$ and $s(k,n) =
f(k) \cdot O(\log n)$, contains all problems that are ``fixed
parameter tractable via a machine needing only slice-wise logarithmic
space,'' and, second, the nondeterministic  counterpart $\Class N[f
\mathrm{poly}, f\log]$. The two other cases are $\Class D[n^f,
f\mathrm{poly}]$ and 
$\Class N[n^f, f\mathrm{poly}]$, which contain problems that are ``in
slice-wise polynomial time via machines that need only fixed
parameter polynomial space.'' See Figure~\ref{fig-inclusions} for the
trivial inclusions between the classes.

In Section~\ref{section-lcs} we will see that these classes are not only of
scholarly interest. Rather, we will show that $\Lang{lcs}$
parameterized by the number of input strings is complete for $\Class
N[f \mathrm{poly}, f\log]$.

\section{Complete Problems for Bounded Nondeterminism}
\label{section-bounded-non}
\label{section-justification-w}
    
In this section we present new natural problems that are
complete for $\Para[W]\Class{NL}$ and $\Para[W]\Class{L}$. Previously,
it was only known that the following ``colored reachability
problem''~\cite{ElberfeldST2012} is complete for $\Para[W]\Class{NL}$: We
are given an edge-colored graph, two vertices $s$ and $t$, and a
parameter~$k$. The question is whether there is a path from $s$ to $t$
that uses only $k$ colors. Our key tool for proving new completeness
results will be the introduction of a ``union operation,'' which turns
$\Class P$-, $\Class{NL}$-, and $\Class L$-complete problems into
$\Para[W]\Class P$-, $\Para[W]\Class{NL}$-, and $\Para[W]\Class
L$-complete problems, respectively. Building on this, we prove the
parameterized associative generability problem to be complete for
$\Para[W]\Class{NL}$. Note that the underlying classical problem is
well-known to be $\Class{NL}$-complete and, furthermore, if we drop
the requirement of associativity, the parameterized and
classical versions are known to be complete for
$\Para[W]\Class P$ and~$\Class P$, respectively. 

At this point, we remark that Guillemot, in a
paper~\cite{Guillemot2011} on parameterized \emph{time} complexity,
uses ``$\Class{WNL}$'' to denote a class different from the class
$\Para[W]\Class{NL}$ defined in this paper. Guillemot chose the name
because his definition of the class is derived from one possible 
definition of $\Class{W}[1]$ by replacing a time by a space
constraint. Nevertheless, we believe that our definition of a
``W-operator'' yields the ``right analogue'' of $\Class W[\Class P]$:
First, there is the above pattern that parameterized version of
problems complete  $\Class P$, $\Class{NL}$, and $\Class L$ tend to be
complete for $\Para[W]\Class P$, $\Para[W]\Class{NL}$, and
$\Para[W]\Class L$, respectively. Furthermore, in
Section~\ref{section-lcs} we show that the class  
$\Class{WNL}$ defined and studied by Guillemot is exactly the
fpt-reduction closure of the time--space class $\Class
N[f\operatorname{poly},f\log]$.

\paragraph{Union Problems.}
For numerous problems studied in complexity theory the input consists
of a string in which some positions can be ``selected'' and the
objective is to select a ``good'' subset of these positions. For
instance, for the satisfiability problem we must select some
variables such that setting them to true makes a formula
true; for the circuit satisfiability problem we must select
some input gates such that when they are set to~$1$ the circuit
evaluates to~$1$; and for the exact cover problem we 
must select some sets from a family of sets so that they form a
partition of the union of the family. In the following, we introduce some
terminology that allows us to formulate all of these problems in a
uniform way and to link them to the W-operator.

Let $\Sigma$ be an alphabet that contains none of the three special
symbols $?$, $0$, and~$1$. We call a word $t \in (\Sigma \cup \{?\})^*$ a
\emph{template}. We call a word $s \in (\Sigma \cup \{0,1\})^*$ an
\emph{instantiation of $t$} if $s$ is obtained from $t$ by replacing
exactly the $?$-symbols arbitrarily by $0$- or $1$-symbols. Given
instantiations $s_1, \dots, s_k$ of the same template~$t$, their
\emph{union} $s$ is the instantiation of $t$ that has a $1$ exactly at
those positions $i$ where at least one $s_j$ has a $1$ at position~$i$
(the union is the ``bitwise or'' of the instantiated positions and is
otherwise equal to the template). 

Given a language $A \subseteq (\Sigma\cup\{0,1\})^*$, we define three
different kinds of union problems for~$A$. Each of them is a
parameterized problem where the parameter is~$k$. As we will see in a
moment, the first kind is linked to the W-operator 
while the last kind links several well-known languages from classical 
complexity theory to well-known parameterized problems. We will also
see that the three kinds of union problems for a language~$A$
often all have the same complexity. 
\begin{enumerate}
\item The input for $\PLang{family-union-}A$ are a template $t \in
  (\Sigma \cup \{?\})^*$ and a family $(S_1, \dots, S_k)$ of $k$ sets
  of instantiations of~$t$. The question is whether there are $s_i \in
  S_i$ for $i \in \{1,\dots,k\}$ such that the union of
  $s_1,\dots,s_k$ lies in $A$.
\item The input for $\PLang{subset-union-}A$ are a template $t \in
  (\Sigma \cup \{?\})^*$, a set $S$ of instantiations of~$t$, and a
  number~$k$. The question is whether there exists a subset $R
  \subseteq S$ of size $|R| = k$ such that the union of $R$'s elements
  lies in $A$.
\item The input for $\PLang{weighted-union-}A$ are a template $t \in
  (\Sigma \cup \{?\})^*$ and a number~$k$. The question is whether
  there exists an instantiation $s$ of $t$ containing exactly 
    $k$ many $1$-symbols such that $s \in A$?
\end{enumerate}

To get an intuition for these definitions, think of instantiations as
words written on transparencies with $0$ rendered as an empty box and
$1$ as a checked box. Then for the family union problem we are given
$k$ heaps of transparencies and the task is to pick one transparency
from each heap such that ``stacking them on top of each other'' yields an 
element of~$A$. For the subset union problem, we are only given one
stack and must pick $k$ elements from it. We call the weighted union
problem a ``union'' problem partly in order to avoid a clash with
existing terminology and partly because the weighted union problem is
the same as the subset union problem for the special set $S$ containing
all instantiations of the template of weight~$1$.

Concerning the promised link between well-known languages and
parameterized problems, consider $A= \Lang{circuit-value-problem}$
($\Lang{cvp}$) where we use $\Sigma$ to encode a circuit and use $0$'s
and $1$'s solely to describe an assignment to the input
gates. Then the input for \PLangText{weighted-union-cvp} are a circuit
with $?$-symbols instead of a concrete assignment together with a
number $k$, and the question is whether we can replace exactly $k$
of the  $?$-symbols by $1$'s (and the other by 
$0$'s) so that the resulting instantiation lies in $\Lang{cvp}$. Clearly,
$\PLang{weighted-union-cvp}$ is exactly the $\Class W[\Class
P]$-complete problem $\PLang{circuit-sat}$, which asks whether there is a
satisfying assignment for a given circuit that sets exactly $k$ input
gates to~$1$. 

Concerning the promised link between the union problems and the
W-op\-er\-a\-tor, recall that the operator provides machines with $f_x$
nondeterministic indices as part of the input. In particular, a
W-machine can mark $f_x$ different ``parts'' of the input -- like one
element from each of $f_x$ many sets in a family, like the elements of a
size-$f_x$ subset of some set, or like $f_x$ many positions in a
template. With this observation it is not difficult to see that if $A
\in C$, then all union versions of $A$ lie in $\Para[W]C$. 
A much deeper observation is that the union versions are also often
\emph{complete} for these classes. In the next theorem, which states this
claim precisely, the following definition of a
\emph{compatible logspace projection $p$} from a language $A$ to a
language~$B$ is used: First, $p$ must be a logspace reduction from
$A$ to~$B$. Second, $p$ is a projection, meaning that each symbol of
$p(x)$ depends on at most one symbol of~$x$. Third, for each word
length $n$ there is a single template $t_n$ such for all $x\in\Sigma^n$
the word $p(x)$ is an instantiation of~$t_n$.

\begin{theorem}\label{theorem-comp}
  Let $C \in \{\Class{NC}^1, \Class L, \Class{NL}, \Class P\}$.
  Let $A$ be complete for~$C$ via compatible logspace
  projections. Then $\PLang{family-union-}A$ is complete for 
  $\Para[W]C$ under para-$\Class L$-reductions.\footnote{The proof
    shows that the theorem actually also holds for any ``reasonable''
    class $C$ and any ``reasonable'' weaker reduction.} 
\end{theorem}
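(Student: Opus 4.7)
The plan is to prove membership and hardness separately. For membership, $\PLang{family-union-}A\in\Para[W]C$ whenever $A\in C$: given a template~$t$, family $(S_1,\dots,S_k)$, and parameter~$k$, a $\Para[W]C$-algorithm interprets the $k\log_2 n$ bits on its two-way choice tape as $k$ indices of length $\log_2 n$, selecting one $s_i\in S_i$ from each $S_i$. Computing the union bit-by-bit is a guarded bitwise~OR, which lies in $\Class{NC}^1$ and hence in every class under consideration, so composing with the $C$-procedure for~$A$ keeps the slice-wise computation inside~$C$.

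For hardness, fix $(Q,\kappa)\in\Para[W]C$ with witnessing $C$-machine~$M$ and function~$\pi$, so $x\in Q$ iff some $b\in\{0,1\}^{f_x\log_2 n}$ makes $M$ accept on $(y,b)$, where $y=(x,\pi(\kappa(x)))$. The derived language $B=\{(y,b)\mid M\text{ accepts }(y,b)\}$ lies in~$C$; by completeness of~$A$ via compatible logspace projections there is a projection $q$ from $B$ to $A$ together with a uniform template family $(t_{n'})_{n'\ge 0}$ such that $q(y,b)$ is an instantiation of $t_{|(y,b)|}$ and each output position depends on at most one input position. The reduction, on input~$x$, computes~$y$, sets $k=f_x$, splits $b$ conceptually into blocks $b_1,\dots,b_k$ of $\log_2 n$ bits each, and views each value $v\in\{0,1\}^{\log_2 n}$ as an index into a set of size~$n$.

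Let $n'=|y|+f_x\log_2 n$ and take the output template $t'$ to be $t_{n'}$. For each $i\in\{1,\dots,k\}$ and $v\in\{0,1\}^{\log_2 n}$, include in $S_i$ an instantiation $s_{i,v}$ whose bit at any $?$-position~$j$ of~$t'$ is (i)~the value that $q$ would output at position~$j$ when the relevant bit inside block~$b_i$ has the value encoded by~$v$, if position~$j$ depends on a bit of~$b_i$; (ii)~the value $q$ outputs at position~$j$ for the known~$y$, if position~$j$ depends on a symbol of~$y$ and $i=1$; and (iii)~the bit~$0$ otherwise. Compatibility of~$q$ makes these cases mutually exclusive and exhaustive, so the bitwise~OR of any selected tuple $(s_{1,v_1},\dots,s_{k,v_k})$ coincides exactly with $q(y,\,v_1v_2\cdots v_k)$. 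Hence the family admits a valid selection with union in~$A$ iff some $b$ satisfies $q(y,b)\in A$ iff $x\in Q$. Each output bit is determined by one position of $q$'s output, which the logspace machine for $q$ can evaluate, and $y$ is computable within space bounded by a function of $\kappa(x)$, so the reduction is in $\Para\Class L$; the new parameter is $k=f_x$, a function of~$\kappa(x)$.

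The main obstacle is the treatment of the $y$-dependent output positions of~$q$, which must appear with their correct fixed values in every union regardless of the selected tuple. Concentrating these bits in the single distinguished set~$S_1$, and zeroing them in every other set, solves this cleanly, but only because compatibility forbids a position from simultaneously depending on a symbol of~$y$ and on a bit of~$b$; the very same property is what lets each variable position be routed to one unique set and suppressed elsewhere. For $C=\Class{NC}^1$ one additionally notes that the guarded-OR step composes with an $\Class{NC}^1$-circuit for~$A$ to yield another $\Class{NC}^1$-circuit, which together with the remark following the theorem statement about weaker reductions handles that case.
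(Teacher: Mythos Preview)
Your argument follows the paper's approach closely: both proofs use the nondeterministic bits as $k$ indices into the $S_i$ for membership, and for hardness both split the choice string $b$ into $f_x$ blocks and build one set $S_i$ per block whose instantiations encode the possible values of block~$i$. The one substantive difference is how you handle output positions of the projection that do \emph{not} depend on any bit of~$b$. The paper simply writes the value that $p$ produces at such a position into every instantiation in every $S_i$; since all copies agree, the bitwise OR is trivially correct. You instead route these bits only through $S_1$ and zero them elsewhere.

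This works for positions that genuinely depend on a symbol of~$y$, but your case~(ii) as written excludes positions that are constant, i.e., depend on no input symbol at all. A compatible logspace projection may well output a constant~$1$ at some $?$-position of the template; under your rules that position receives~$0$ in every $s_{i,v}$ and hence~$0$ in the union, contradicting the claim that the union equals $q(y,v_1\cdots v_k)$. The fix is immediate: either broaden case~(ii) to ``position~$j$ does not depend on any bit of~$b$'' (so constant positions are carried by $S_1$ as well), or adopt the paper's symmetric convention and write the fixed value into every $S_i$. With that adjustment your proof is correct and essentially coincides with the paper's.
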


\begin{proof}
  For containment, on input of a template $t$ and family
  $(S_1,\dots,\penalty0 S_k)$ of sets of instantiations of~$t$, a
  $\Para[W]C$-\penalty0machine or -circuit interprets its $k \log_2 n$ 
  nondeterministic bits as $k$ indices, one for each $S_i$. Let $s_i
  \in S_i$ be the elements selected in this way. We run a simulation of
  the $C$-machine (or $C$-circuit) that decides $A$ on the union $s$
  of $s_1,\dots,s_k$. For logspace machines, we may not have enough 
  space to write $s$ on a tape, so whenever the machine would like to
  know the $j$th bit of $u$, we simply compute the bitwise-or of the
  $j$th positions of the $s_i$. 
  
  For hardness, consider any
  problem $(Q,\kappa) \in \Para[W]C$. By definition, this means the following:
  There are a language $X \subseteq \Gamma^*$ in~$C$ and computable
  functions $\pi \colon 
  \mathbb N \to \Pi^*$ and $f \colon \mathbb N \to \mathbb N$ such 
  that $x \in Q$ if, and only if, there is a string $b \in \{0,1\}^{f_x \log_2 n}$
  with $(x,\pi(\kappa(x)),b) \in X$. Furthermore, since 
  $A$ is complete for~$C$ via compatible logspace projections, we can
  reduce $X$ to $A$ via some~$p$. (As always,  $n = |x|$ and $f_x = f(\kappa(x))$.)

  For the reduction of $(Q,\kappa)$ to $\PLang{family-union-}A$,
  let an input $x$ be given. Our para-$\Class L$-re\-duc\-tion
  first computes $\pi(\kappa(x))$. Since our reduction $p$ is compatible,
  for all possible $b$ the string $p(x,\pi(\kappa(x)),b)$ will have
  $0$-symbols and $1$-symbols at the same positions and all other
  positions will not vary with~$b$ at all. Our template~$t$ will
  be the string $p(x,\pi(\kappa(x)),b)$ with $?$-symbols placed at
  these positions (as argued, we can use any $b$).

  To define the sets of instances $S_i$, observe that the strings $b \in 
  \{0,1\}^{f_x \log_2 n}$ can be thought of as sequences of $f_x$
  symbols from the alphabet $\Delta = \{0,1\}^{\log_2 n}$, whose
  elements we call \emph{blocks}. For $i \in \{1,\dots,f_x\}$  let $S_i =
  \{m_i^\delta \mid \delta \in \Delta\}$ where $m_i^\delta$ replaces
  the $r$th $?$-symbol of the template~$t$ by $c \in \{0,1\}$ as
  follows: If the $r$th position depends on a symbol in
  $(x,\pi(\kappa(x)),b)$ that lies in a block of $b$, but not in the
  $i$th block, let $c = 0$. Otherwise, let $c$ be whatever symbol ($0$
  or $1$) the reduction outputs when the $i$th block is set
  to~$\delta$. This concludes the construction. 

  As an example for the construction, suppose the reduction $p$ simply
  doubles its input (so $w$ is mapped to $w w$) and $\pi$ just returns the
  empty string, and $\Sigma = \{\alpha,\beta,\gamma\}$. Consider, say, $x =
  \alpha\beta\gamma\alpha$ and assume $f_x=2$. We then 
  have $\Delta = \{00,01,10,11\}$. The reduction would produce two
  sets $S_1$ and $S_2$. For $S_1$, we have a look at what $p$ does on
  input of a string like $(x,\pi(\kappa(x)),b)$. For simplicity let us
  ignore parentheses and commas, and consider $b=1111$, so this string
  would just be $\alpha\beta\gamma\alpha1111$. The reduction maps this to
  $\alpha\beta\gamma\alpha1111\alpha\beta\gamma\alpha1111$. In this string, the fifth, sixth, thirteenth, 
  and fourteenth bits actually depend on the first block of
  $\alpha\beta\gamma\alpha1111$, so the reduction would produce the first set
  \begin{align*}
    S_1 =
    \{
    &\alpha\beta\gamma\alpha0000\alpha\beta\gamma\alpha0000,\\
    &\alpha\beta\gamma\alpha0100\alpha\beta\gamma\alpha0100,\\
    &\alpha\beta\gamma\alpha1000\alpha\beta\gamma\alpha1000,\\
    &\alpha\beta\gamma\alpha1100\alpha\beta\gamma\alpha1100\}. 
  \end{align*}
  In a similar manner, the
  reduction would produce the second set
  \begin{align*}
    S_2 = \{
    &\alpha\beta\gamma\alpha0000\alpha\beta\gamma\alpha0000,\\
    &\alpha\beta\gamma\alpha0001\alpha\beta\gamma\alpha0001,\\
    &\alpha\beta\gamma\alpha0010\alpha\beta\gamma\alpha0010,\\
    &\alpha\beta\gamma\alpha0011\alpha\beta\gamma\alpha0011\}.
  \end{align*}
  Observe
  that, indeed, we can get every string $p(x,\pi(\kappa(x)),b)$ by
  taking the union of one string from $S_1$ and one string
  from~$S_2$. 

  To see that the reduction is correct, consider the union of the
  elements of any set $\smash{\{m_1^{\delta_1}},\dots,\penalty
  0m_{f_x}^{\delta_{f_x}}\}$ where the $m_i^{\delta_i}$ are chosen
  from the different $S_i$. By construction, their union will be
  exactly the image of $(x,\pi(\kappa(x)),\delta_1\dots\delta_{f_x})$
  under~$p$. In particular, $x \in Q$ holds if, and only if, we can choose one
  instantiation from each $S_i$ such that their union is in~$A$.
\end{proof}

\paragraph{Parameterized Satisfiability Problems.}

Recall that the problem \PLangText{weighted-union-cvp} equals
\PLangText{circuit-sat}. 
Since one can reduce \PLangText{family-union-cvp} to
\PLangText{weighted-union-cvp} (via essentially the same reduction as
that used in the proof of Theorem~\ref{theorem-nc1} below),
Theorem~\ref{theorem-comp} provides us with a direct proof that
\PLangText{circuit-sat}${}={}$\PLangText{weighted-union-cvp} is
complete for $\Para[W]\Class P$.  We get an even more interesting
result when we apply the theorem to $\Lang{bf}$, the propositional
formula evaluation problem. We encode pairs of formulas and
assignments in the straightforward way by using $0$ and $1$ solely for the
assignment. Since $\Lang{bf}$ is complete for $\Class{NC}^1$ under
compatible logspace reductions, see~\cite{Buss1987,Bussetal1992},
\PLangText{family-union-bf} is complete for $\Para[W]\Class{NC}^1$ by
Theorem~\ref{theorem-comp}. By further reducing the problem to
\PLangText{weighted-union-bf}, we obtain: 

\begin{theorem}\label{theorem-nc1}
  $\PLang{weighted-union-bf}$ is para-$\Class L$-complete\footnote{As
    in Theorem~\ref{theorem-comp} one can also use weaker reductions.}
  for $\Para[W]\Class{NC}^1$.
\end{theorem}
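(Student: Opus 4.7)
The plan is to combine Theorem~\ref{theorem-comp}, applied with $C=\Class{NC}^1$ and $A=\Lang{bf}$ (which is $\Class{NC}^1$-complete under compatible logspace projections by~\cite{Buss1987,Bussetal1992}), with two short arguments: first, a direct verification that $\PLang{weighted-union-bf}$ itself lies in $\Para[W]\Class{NC}^1$; and second, a $\Para\Class L$-reduction from the now $\Para[W]\Class{NC}^1$-hard problem $\PLang{family-union-bf}$ to $\PLang{weighted-union-bf}$. Composing the latter with the hardness statement of Theorem~\ref{theorem-comp} then yields the desired hardness of $\PLang{weighted-union-bf}$.

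For containment I would mimic the simulation step from the proof of Theorem~\ref{theorem-comp}: on input $(t,k)$, a $\Para[W]\Class{NC}^1$ circuit receives $k\log_2 n$ nondeterministic bits, parses them as $k$ pointers into the template, checks that these pointers are pairwise distinct and all point to $?$-positions, and then evaluates the Boolean formula obtained by virtually setting those $?$-positions to $1$ and every other $?$-position to $0$. Distinctness, the pointer-type test, and $\Lang{bf}$ itself are all in $\Class{NC}^1$, so the whole check is.

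For the reduction from $\PLang{family-union-bf}$, given a template $t$ with underlying formula $F$ in variables $y_1,\dots,y_m$ (one per $?$-position) and a family $(S_1,\dots,S_k)$ of sets $S_i=\{s_i^1,\dots,s_i^{m_i}\}$, I introduce a fresh selector variable $x_{i,j}$ for each instantiation and build a new template $t'$ whose only $?$-positions are the $x_{i,j}$ and whose formula content is
\[
F\bigl(Y_1(\bar x),\dots,Y_m(\bar x)\bigr)\;\wedge\;\bigwedge_{i=1}^{k}\operatorname{AtMostOne}(x_{i,1},\dots,x_{i,m_i}),\quad\text{where } Y_r(\bar x)=\bigvee_{(i,j):\,s_i^j[r]=1}x_{i,j}.
\]
I set the new weight parameter to $k'=k$. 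Together with the AtMostOne gadgets, the weight constraint forces exactly one $x_{i,j_i}$ per group to be $1$, whereupon each $Y_r$ evaluates to the $r$-th bit of the union of $s_1^{j_1},\dots,s_k^{j_k}$; hence $t'$ admits a valid weight-$k$ instantiation if and only if the original family-union instance is positive.

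The main obstacle is verifying that this reduction is actually $\Para\Class L$-computable, since textually inlining $Y_r$ at every occurrence of $y_r$ in $F$ multiplies the formula size by a factor of up to $\sum_i m_i$. I expect this to be routine: $F$ is a tree that can be traversed with only constantly many pointers into the input, and the $Y_r$-expansions together with the quadratic-size AtMostOne gadgets can be emitted in a single streaming pass while keeping only position counters on the work tape.
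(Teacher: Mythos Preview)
Your proof is correct and follows essentially the same route as the paper: invoke Theorem~\ref{theorem-comp} for $\PLang{family-union-bf}$ and then reduce to $\PLang{weighted-union-bf}$ via one selector variable per instantiation and the disjunctive substitution $y_r\mapsto\bigvee_{(i,j):\,s_i^j[r]=1}x_{i,j}$. The only cosmetic differences are that the paper factors this reduction through the intermediate problem $\PLang{subset-union-bf}$ and, in the composed reduction, ends up with an at-least-one conjunct $\bigwedge_i\bigvee_j x_{i,j}$ in place of your $\operatorname{AtMostOne}$ gadgets---under the exact-weight-$k$ constraint both choices force exactly one selector per block, so the arguments coincide.
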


\begin{proof}
  The language $\Lang{bf}$ is complete for $\Class{NC}^1$,
  see~\cite{Buss1987,Bussetal1992}, and completeness can be achieved
  by compatible projections: Indeed, for input words of the same
  length, the reduction will map them to the same formula, only the
  assignment to the variables will differ (the input word is encoded
  solely in this assignment). Thus, by Theorem~\ref{theorem-comp} we
  get that $\PLang{family-union-bf}$ is complete for
  $\Para[W]\Class{NC}^1$ under para-$\Class L$-reductions (actually,
  also under weaker reductions like parameterized first-order
  reduction, but they are not in the focus of this paper).
  
  We now show that $\PLang{family-union-bf}$ reduces to
  $\PLang{subset-union-bf}$, which in turn reduces to
  $\PLang{weighted-union-bf}$. For the first reduction, let the sets $S_1$
  to $S_k$ be given as input. All elements $s_{ij}$ of the $S_i$
  represent assignments to the variables of the same
  formula~$\phi$. Our aim is to construct a set $S$ and a new formula
  $\phi' = \phi \land \psi$, where the job of $\psi$ is to ensure
  that any selection of $k$ elements from $S$ can only lead to $\phi'$
  being true if the selection corresponds to picking ``exactly one
  element from each $S_i$.'' In detail, for each $s_{ij}$ we introduce a
  new variable $v_{ij}$. The assignment $s'_{ij}$ for $\phi'$ is the
  same as $s_{ij}$ for the ``old'' variables and is $1$ only for
  $\smash{v_{ij}}$ among the new variables ($\smash{v_{ij}}$ ``tags''
  $\smash{s_{ij}}$). As an example, suppose there are three variable
  $x$, $y$, and $z$ in $\phi$ and suppose $S_1 = \{\phi000, \phi001\}$
  (meaning that one assignment sets all variables to false and the
  other sets only $z$ to true) and $S_2 = \{\phi001\}$. Then there
  would be three additional new variables and $S 
  = \{\phi'000\,100, \phi'001\,010, \phi'001\,001\}$. 
  Now, setting $\psi = \bigwedge_{i=1}^k \bigvee_{j=1}^{|I_i|} v_{ij}$
  ensures that $\psi$ will only be true for the union of $k$
  assignments taken from $S$ if exactly one assignment was taken from 
  each~$S_i$. 

  Next, we reduce $\PLang{subset-union-bf}$ to
  $\PLang{weighted-union-bf}$. Towards this end, let $S =
  \{b_1,\dots,b_n\}$ be given as input and let $\phi$ be the formula
  underlying the~$s_i$. Our new formula $\phi'$ has exactly
  $n$ variables $v_1$ to~$v_n$ and is obtained from $\phi$ by leaving
  the structure of $\phi$ identical, but substituting each occurrence of
  a variable $x$ as follows: let $X \subseteq \{1,\dots,n\}$ be the
  set of indices $i$ such that in $s_i$ the variable $x$ is set
  to~$1$. Then we substitute $x$ by $\bigvee_{i\in X} v_i$. The output
  $S'$ of the reduction is $\phi'$ together with all assignments
  making exactly one of the variables $v_i$ true. As an example, let
  $\phi = x \land (y \to x) \land z$ and let $S = \{\phi000, \phi101, \phi 010,
  \phi111\}$. Then there would be four variables $v_1$ to $v_4$ and
  the formula $\phi'$ would be $v_2 \land ((v_3 \lor v_4) \to v_2)
  \land (v_2 \lor v_4)$ and the set $S'$ would be $\{\phi'0001,
  \phi'0010, \phi'0100, \phi'1000\}$. 

  To see that this reduction is correct, first assume that we have
  $(t,S,k) \in   \PLang{subset-union-bf}$ via a selection $\{s_1,\dots,s_k \}\subseteq
  S$. Then we also have $(t',S',k) \in \PLang{weighted-union-bf}$ via the $k$ elements of
  $S'$ where exactly the variables corresponding to $s_1$ to $s_k$ are
  set to true: in $\phi'$ the expressions  $\bigvee_{i\in X} v_i$ that
  was substituted for a variable $x$ will be true exactly if one of
  the $s_i$ has set $x$ to~$1$. Thus, 
  $\phi'$ will evaluate to $1$ for the assignment in which exactly the
  selected $v_i$ are true if, and only if, $\phi$ evaluates to true
  for the ``bitwise or'' of the assignments $s_1,\dots,s_k$ -- which
  it does by assumption. For the other direction, assume that $(t',S',k)
  \in \PLang{weighted-union-bf}$. Then, by essentially the same argument,
  we obtain a subset of $S$ whose bitwise or makes $\phi$ evaluate
  to~$1$.
\end{proof}

By definition, $\Class W[\Lang{sat}]$ is the fpt-reduction closure of
$\PLang{weighted-sat}$, which is the same as \PLangText{weighted-union-bf}. Thus, by the theorem,
$\Class W[\Lang{sat}]$ is also the fpt-reduction closure of
$\Para[W]\Class{NC}^1$ --~a result that may be of independent interest. For example, it 
shows that $\Class{NC}^1 = \Class P$ implies $\Class W[\Lang{sat}] =
\Class W[\Class P]$. Note that we do not claim $\Class
W[\Lang{sat}]=\Para[W]\Class{NC}^1$ since $\Para[W]\Class{NC}^1$ is
presumably not closed under fpt-reductions.

\paragraph{Graph Problems.}

In order to apply Theorem~\ref{theorem-comp} to standard graph
problems like $\Lang{reach}$ or $\Lang{cycle}$, we encode graphs using
adjacency matrices consisting of $0$- and $1$-symbols. Then a template
is always a string of $n^2$ many $?$-symbols for $n$ vertex
graphs. The ``colored reachability problem''
mentioned at the beginning of this section equals
$\PLang{subset-union-reach}$.\footnote{For exact equality, in the
  colored reachability problem we must allow edges to have several
  colors, but this is does not change the problem complexity.}  Note
that any reduction to a union problem for this encoding is
automatically compatible as long as the number of vertices in the
reduction's output depends only on the length of its input.   

Applying Theorem~\ref{theorem-comp} to standard $\Class{L}$- or
$\Class{NL}$-complete problems yields that their family union versions are complete for
$\Para[W]\Class L$ and $\Para[W]\Class{NL}$, respectively. By reducing
the family versions further to the subset union version, we get
the following: 

\begin{theorem}\label{theorem-unions}
  For $A \in \{\Lang{reach}, \Lang{dag-reach}, \Lang{cycle}\}$,
  $\PLang{subset-union-}A$ is complete for 
  $\Para[W]\Class{NL}$, while for $B \in \{\Lang{undirected-reach},$
  $\Lang{tree},$ $\Lang{forest},$ $\Lang{undirected-cycle} \}$, the
  problem $\PLang{subset-union-}B$ is complete for $\Para[W]\Class L$.
\end{theorem}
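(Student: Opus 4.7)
The plan is to show both containment and hardness for each of the listed problems. Containment in $\Para[W]\Class{NL}$, respectively $\Para[W]\Class L$, is immediate from the W-operator: using its $k\log_2 n$ nondeterministic bits a machine guesses $k$ indices into $S$ and then simulates the known $\Class{NL}$- or $\Class L$-algorithm for the underlying graph problem on the bitwise OR of the selected instantiations, reading that OR on the fly so the union itself never has to be written down. For hardness, Theorem~\ref{theorem-comp} already gives that the family-union version of each listed problem is complete for its target class, since all seven problems are standardly $\Class L$- or $\Class{NL}$-complete under compatible logspace projections in the adjacency-matrix encoding. It therefore suffices to reduce $\PLang{family-union-}A$ to $\PLang{subset-union-}A$ for each $A$ via a gadget that forces every accepting $k$-selection from $S$ to contain exactly one element of each $S_i$.

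For the monotone problems $\Lang{reach}$, $\Lang{dag-reach}$, $\Lang{cycle}$, $\Lang{undirected-reach}$, $\Lang{undirected-cycle}$ I would use a \emph{witness chain}: adjoin fresh vertices $w_0,\dots,w_k$ with template $?$-positions on the edges joining consecutive chain vertices, and link the chain into the main template by mandatory edges (from $w_k$ to the original source for the reachability variants, additionally from the original sink back to $w_0$ for the cycle variants, after first preprocessing the template so that the main part is acyclic). Each tagged element $s'_{ij}\in S'_i$ mirrors $s_{ij}$ on the main $?$-positions, activates only the $i$-th chain edge, and sets the remaining chain positions to $0$. Every accepting path or cycle then has to traverse the whole chain, forcing at least one pick per $S'_i$, and the $k$-budget tightens this to exactly one; the main part of the union then coincides with the family union, making the reduction correct.

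For the non-monotone problems $\Lang{forest}$ and $\Lang{tree}$ the witness chain fails because adding edges can only hurt the property; instead I would use a \emph{parallel-path gadget}. Introduce terminal pairs $(x_i,x'_i)$ for each $i$, and for each $s_{ij}\in S_i$ a private vertex $y_{ij}$; the tagged element $s'_{ij}$ activates the two template $?$-positions between $x_i$ and $y_{ij}$ and between $y_{ij}$ and $x'_i$, and puts $0$ on the remaining gadget $?$-positions. Picking two elements from the same $S'_i$ creates two internally disjoint $x_i$-to-$x'_i$ paths and hence a cycle, so any $k$-subset whose union is a forest picks at most one element per $S'_i$, and the $k$-budget again forces exactly one. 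With the gadget placed on vertices disjoint from the main template, the whole graph is a forest iff the main-part union (i.e.\ the family union) is a forest; for $\Lang{tree}$ I would add mandatory edges from each $x'_i$ to $x_{i+1}$ to form a backbone through the gadget and a single mandatory edge from $x'_k$ to a distinguished vertex of the main template, so that connectivity of the whole graph mirrors connectivity of the family union without creating any new cycles. The main obstacle will be the careful bookkeeping for the $\Lang{tree}$ case, where acyclicity and connectivity have to be simultaneously enforced, together with the preprocessing needed for the $\Lang{cycle}$ variants to rule out stray cycles inside the main part; both are local issues but must be verified case by case.
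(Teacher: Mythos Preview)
Your high-level strategy---invoke Theorem~\ref{theorem-comp} for the family-union versions and then reduce each $\PLang{family-union-}A$ to $\PLang{subset-union-}A$---is exactly what the paper does. The gadgets differ substantially, however. The paper uses one uniform construction for six of the seven problems: every ordered pair $(a,b)\in V^2$ acquires $k$ fresh intermediate vertices $v_{ab}^1,\dots,v_{ab}^k$, and an instantiation coming from $S_i$ contributes the edge $(v_{ab}^{i-1},v_{ab}^i)$ on \emph{every} such path (with $v_{ab}^0=a$), together with the closing edge $(v_{ab}^k,b)$ for those $(a,b)$ that were edges of that element. A selection that misses some $S_i$ then leaves \emph{every} such path broken at stage~$i$, so no two original vertices remain connected; this simultaneously kills $s$--$t$ reachability, destroys every cycle, and disconnects the graph (ruling out tree-ness). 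Only $\Lang{forest}$ needs a separate idea, and there the paper's pairwise-triangle gadget is close in spirit to your parallel-path gadget. Your witness chain is a legitimate alternative for the reachability problems and, provided you actually reduce from $\PLang{family-union-dag-reach}$ rather than from the cycle problems themselves, for the cycle problems as well; the phrase ``preprocessing the template so that the main part is acyclic'' must be read that way, since taken literally it is incoherent for an instance whose question is precisely whether a cycle exists.

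There is, however, a genuine gap in your $\Lang{tree}$ gadget. The private vertices $y_{ij}$ belong to the common template and hence to every instantiation's vertex set, but only the single element $s'_{ij}$ ever gives $y_{ij}$ an incident edge. In any $k$-selection all the unpicked $y_{ij}$---and there are $\sum_i(|S_i|-1)$ of them---are isolated, so the union is never connected and therefore never a tree, regardless of what happens in the main part. A repair is possible (for instance make the edge $x_i y_{ij}$ mandatory and let $s'_{ij}$ activate only $y_{ij} x'_i$), but you then have to re-verify both directions, including that picking zero elements from some $S'_i$ (forcing two from another) still produces a cycle while a correct selection keeps the entire gadget a tree attached to the main part at a single point. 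The paper's edge-subdivision construction sidesteps this difficulty because a wrong selection already disconnects the original vertex set, which by itself rules out being a tree without any separate connectivity gadget.
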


\begin{proof}
  For each of the problems, we reduce its family union version to
  it. This suffices: By Theorem~\ref{theorem-comp} and the fact
  that the underlying problems like $\Lang{reach}$ are complete for
  $\Class{NL}$ and $\Class L$ under compatible logspace projections
  (even under first-order projections), the family versions are
  complete for the respective classes.
  
  Recall that the difference between the problems $\PLang{family-union-}A$ and
  $\PLang{subset-union-}A$ is 
  that in the first we are given $k$ sets $S_i$ from each of which we
  must choose one element, while for the latter we can pick $k$
  elements from a single set $S$ arbitrarily. If the reduction were to
  just set $S$ to the union of the $S_i$, then many choices of $k$
  sets of $S$ will correspond to taking multiple elements from a
  single~$S_i$. In such cases, their union should \emph{not} be an
  element of~$A$. 

  \begin{figure}[htpb]
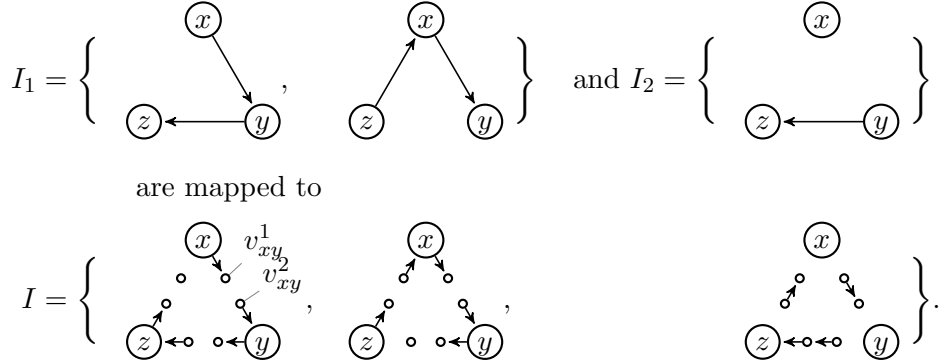

    \tikzset{
      p1/.style = {graphs/math nodes,at={(90:9mm)}},
      p2/.style = {graphs/math nodes,at={(-30:9mm)}},
      p3/.style = {graphs/math nodes,at={(-150:9mm)}},
      v/.style= {graphs/as=,minimum size=3pt, inner sep=0pt,text height=,text depth=,
},
      xy1/.style= {v,at={($ (90:9mm)!.375!(-30:9mm) $)}},
      xy2/.style= {v,at={($ (90:9mm)!.625!(-30:9mm) $)}},
      yz1/.style= {v,at={($ (-30:9mm)!.375!(-150:9mm) $)}},
      yz2/.style= {v,at={($ (-30:9mm)!.625!(-150:9mm) $)}},
      zx1/.style= {v,at={($ (-150:9mm)!.375!(90:9mm) $)}},
      zx2/.style= {v,at={($ (-150:9mm)!.625!(90:9mm) $)}},
      my pic/.style = {baseline,
        graphs/every graph/.style={graph, nodes={node,
            inner sep=1pt, minimum size=2mm}, no placement}
      }
    }
    \begin{align*}
      \begin{array}{rllrl}
        I_1 = \Biggl\{&
        \tikz[my pic] \graph { x[p1] -> y[p2] -> z[p3] };, &
        \tikz[my pic] \graph { x[p1] -> y[p2],  z[p3] -> x };\ 
        \Biggr\}& \text{ and }
        I_2 = \Biggl\{&
        \tikz[my pic] \graph { x[p1],  y[p2] -> z[p3] };\ 
        \Biggr\} \\[2.5em]
        &\text{ are mapped to}\\[.5em]
        I = \Biggl\{&
        \tikz[my pic] {\graph {
          vxy1[xy1], vxy2[xy2], vyz1[yz1], vyz2[yz2], vzx1[zx1], vzx2[zx2],
          x[p1], y[p2], z[p3],
          x -> vxy1, vxy2 -> y -> vyz1, vyz2 -> z -> vzx1 };
        \node [above right=2mm, inner sep=1pt,] (l1) at (vxy1) {$v_{xy}^1$};
        \node [above right=2.5mm, yshift=-1.5mm,inner sep=1pt,] (l2) at (vxy2) {$v_{xy}^2$};
        \draw [help lines] (l1) -- (vxy1);
        \draw [help lines] (l2) -- (vxy2);
        }, &
        \tikz[my pic] \graph {
          vxy1[xy1], vxy2[xy2], vyz1[yz1], vyz2[yz2], vzx1[zx1], vzx2[zx2],
          x[p1], y[p2], z[p3],
          x -> vxy1, vxy2 -> y -> vyz1, z -> vzx1, vzx2 -> x };
        ,&&
        \tikz[my pic] \graph {
          vxy1[xy1], vxy2[xy2], vyz1[yz1], vyz2[yz2], vzx1[zx1], vzx2[zx2],
          x[p1], y[p2], z[p3],
          x, vxy1 -> vxy2,  y, vyz1 -> vyz2 -> z, vzx1 -> vzx2 };\ 
        \Biggr\}.        
      \end{array}
    \end{align*}
    \caption{An example of the reduction from a family
      union graph problem to a subset union graph problem. In the
      example, $V = \{x,y,z\}$. The $s_i$ are indicated as edge sets
      even though, in reality, they are bitstrings encoding adjacency
      matrices. The small vertices are the $v_{ab}^1$ and $v_{ab}^2$,
      but only those for $(a,b) \in \{(x,y), (y,z), (z,x)\}$ are shown.}  
    \label{fig-red}
  \end{figure}
  
  To achieve the effect that the union of a subset of $S$ with
  multiple elements from the same $S_i$ is not in~$A$, we use the same
  construction for all $A$, except for $A = \Lang{forest}$. The
  construction works as follows: Since the $S_i$ are compatible, they
  are defined over the same set~$V$ of vertices. Each $s \in S_i$
  encodes an edge set $E_s \subseteq V^2$. We construct a new vertex
  set $V' \supseteq V$ as follows: For each pair $(a,b) \in V^2$ we
  introduce $k$ new vertices $v_{ab}^1$, \dots, $v_{ab}^k$ and add
  them to~$V'$. For each $s\in S_i$ we define a new edge
  set $E'_s \subseteq V' \times V'$ as follows: First, for each $(a,b)
  \in V^2$ let $(v_{ab}^{i-1},v_{ab}^i) \in E'_s$, where $v_{ab}^0 =
  a$. Second, for each $(a,b) \in E_s$, let $(v_{ab}^k,b)
  \in E'_s$. Let $s'$ be the bitstring encoding the adjacency matrix
  of~$E'_s$. We set $S = \{\,s' \mid s \in S_i \text{ for some
    $i$}\}$. An example of how this reduction works is depicted in
  Figure~\ref{fig-red}.
  
  In order to argue that the reduction works for all problems, we make
  two observations. Given any subset
  $\{s'_1,\dots,s'_k\} \subseteq S$, for each $s'_i$ there is a unique
  corresponding~$s_i$, lying in (some) $S_j$. Let $G' = (V',E')$ denote graph whose
  adjacency matrix is the union of $\{s'_1,\dots,s'_k\}$ and,
  correspondingly, let $G = (V,E)$ be the union of the $S_i$. Now, first
  assume that, indeed, we have $s_i \in S_i$ for all $i \in
  \{1,\dots,k\}$. Then for every pair $(a,b) \in V$ 
  the new vertices $v_{ab}^1$ to $v_{ab}^k$ will form a path in
  $G'$ attached to~$a$. Furthermore, for every edge $(a,b) \in E$
  there is a path from 
  $a$ to $b$ in $E'$. On the other hand, for $(a,b) \notin E$, we
  cannot get from $a$ to $b$ in $G'$ using only new vertices: the edge
  $v_{ab}^k \to b$ will be missing. This
  proves our first observation: for vertices $a,b \in V$ there is a
  path from $a$ to $b$ in $G'$ if, and only if, there is such a path
  in $G$. Our second observation concerns the case that there are two
  strings $s'_i$ and $s'_j$ such that $s_i$ and $s_j$ lie in the same
  set $S_x$. In this case, for every two vertices $a,b \in V$ at least
  one edge is missing along the path $v_{ab}^0$ to
  $v_{ab}^k$. Thus, we can observe that there is no path from any
  $a \in V$ to any other $b \in V$ in $G'$.

  Let us now argue that the reduction is correct: For the reachability
  problem, by the first observation reachability is correctly
  transferred from $G$ to $G'$ and by the second observation no
  ``wrong'' choice of $s'_i$ will induce reachability. The exact same
  argument holds for undirected reachability and reachability in
  \textsc{dag}s. For trees and cycles, the argument also works since trees and
  cycles remain trees and cycles for ``correct'' choices of the $s'_i$
  and they get destroyed for any ``wrong'' choice.

  For forests, the reduction described above does not work since in
  case several $s'_i$ are picked such that their $s_i$ stem from the
  same~$S_j$, the graph $G'$ becomes a collection of small trees: a
  forest -- and this is exactly what should \emph{not} happen. The
  trick is to use a different reduction: For every pair $s_i, s_j \in
  S_x$ for $x \in \{1,\dots,k\}$ we add three new vertices to the
  graph: $a$, $b$, and $c$. In $s'_i$ we add the edges $(a,b)$ and
  $(b,c)$, in $s'_j$ we add the edge~$(c,a)$. Now, clearly, whenever
  $s'_i$ and $s'_j$ are picked stemming from the same $S_x$, a cycle
  will ensue; and if only one $s_i$ is picked from each $S_i$, 
  paths of length $1$ or $2$ will result in the new vertices that do
  not influence whether the graph is a forest or not.
\end{proof}

To conclude this section on union graph problems, we would like to point out
that one can also ask which problems are complete for the
``co-W-classes'' $\Para[W_\forall]\Class{NL}$ and
$\Para[W_\forall]\Class L$. It is straightforward to see that an
analogue of Theorem~\ref{theorem-comp} holds if we define problems 
$\PLang{partitioned-\penalty0union$_\forall$-}A$ as a ``universal
version'' of the partitioned union problem (we ask whether \emph{for
  all} choices of $b_i$ their union is in $A$). For instance,
$\PLang{partitioned-union$_\forall$-cycle}$ is complete for
$\Para[W_\forall]L$. It is also relatively easy to employ the same ideas as
those from the proof of Theorem~\ref{theorem-unions} to show that
the universal union versions of all problems mentioned in
Theorem~\ref{theorem-unions} are complete for $\Para[W_\forall]\Class{NL}$
and $\Para[W_\forall]\Class L$ \emph{except} for
$\PLang{union$_\forall$-tree}$, whose complexity remains open.

\paragraph{Associative Generability.}

The last union problem we study is based on the \textsc{generators}
problem, which contains tuples $(U,\circ,x,G)$ where $U$ is a set,
$\circ\colon U^2 \to U$ is (the table of) a binary operation, $x\in
U$, and $G \subseteq U$ is a set. The question is whether the
closure of $G$ under~$\circ$ (the smallest superset of $G$ closed
under~$\circ$) contains~$x$. A restriction of this problem is
$\Lang{associative-generator}$, where $\circ$ must be associative. By
two classical results, $\Lang{generators}$ is $\Class
P$-complete~\cite{JonesL1976} and $\Lang{associative-generator}$ is
$\Class{NL}$-complete~\cite{JonesLL1976}.

In order to apply the union operation to generator problems, we encode
$(U,\circ,x,G)$ as follows: $U$, $\circ$, and $x$ are encoded in some
sensible way using the alphabet~$\Sigma$. To encode $G$, we add a $1$
after the elements of $U$ that are in $G$ and we add a $0$ after
\emph{some} elements of $U$ that are not in~$G$. This means that in
the underlying templates we get the freedom to specify that only some
elements of~$U$ may be chosen for~$G$. Now,
$\PLang{weighted-union-generators}$ equals the problem known as
$\PLang{generators}$ in the literature: Given $\circ$, a
subset~$C\subseteq U$ of generator candidates, a  
parameter $k$, and a target element~$x$, the question is whether there
exists a set $G \subseteq C$ of size $|G| = k$ such that the closure
of~$G$ under $\circ$ contains~$x$. Flum and Grohe~\cite{FlumG2006}
have shown that $\PLang{generators}$ is complete for $\Class W[\Class
P] = \Para[W] \Class P$ (using a slightly different problem definition
that has the same complexity, however). Similarly, 
\PLangText{weighted-union-associative-generator} is also known as
$\PLang{agen}$ and we show: 

\begin{theorem}\label{theorem:agen}
  $\PLang{agen}$ is complete for $\Para[W]\Class{NL}$.
\end{theorem}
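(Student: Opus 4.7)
The plan is to handle containment and hardness separately, following the template established by Theorems~\ref{theorem-nc1} and~\ref{theorem-unions}.

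For containment in $\Para[W]\Class{NL}$, I would invoke the W-operator directly. An input for $\PLang{agen}$ specifies $U$, $\circ$, $x$, a candidate set $C \subseteq U$, and the parameter~$k$; a $\Para[W]\Class{NL}$-machine interprets its $k \log_2 n$ two-way nondeterministic bits as $k$ indices into~$C$, thereby fixing a $k$-element generator set $G \subseteq C$. After this deterministic selection, the standard $\Class{NL}$ algorithm for the classical $\Lang{associative-generator}$ problem verifies that the closure of~$G$ under~$\circ$ contains~$x$, using only $O(\log n)$ additional workspace.

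For hardness, my plan is to apply Theorem~\ref{theorem-comp}. First I would inspect the classical Jones--Lien--Laaser proof that $\Lang{associative-generator}$ is $\Class{NL}$-complete and confirm that it yields a compatible logspace projection: the table of~$\circ$ and the universe~$U$ should depend only on the input length, while the reduction writes its input-dependent bits solely into the generator-set part of the encoding. Theorem~\ref{theorem-comp} then immediately gives that $\PLang{family-union-associative-generator}$ is complete for~$\Para[W]\Class{NL}$, and it remains to reduce this problem in para-$\Class L$ first to $\PLang{subset-union-associative-generator}$ and then to $\PLang{weighted-union-associative-generator} = \PLang{agen}$. The subset-to-weighted step should follow the same idea as the final reduction in the proof of Theorem~\ref{theorem-nc1}: introduce one fresh candidate per element of~$S$ so that weight-$k$ selections correspond exactly to $k$-subsets of~$S$.

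The main obstacle is the intermediate step from family-union to subset-union. In the graph setting of Theorem~\ref{theorem-unions} one inserts fresh vertex chains as ``tagging gadgets,'' but here the analogous mechanism must be folded into an \emph{associative} operation, and a naive state-tracking construction that records ``which $S_i$ have already been committed to'' inside each element would blow up the universe exponentially. My plan is to extend the universe by $k$ fresh activator elements $a_1,\dots,a_k$ together with a two-sided absorbing sink~$\bot$, and to introduce one candidate generator per pair $(i,j)$ with $s_{ij}\in S_i$; the new operation $\circ'$ is designed so that combining two candidates tagged with the same level~$i$ collapses to~$\bot$, whereas choosing exactly one per level produces all activators and, gated on their joint presence, lets $\circ'$ replay the original $\circ$ on the union of the corresponding $G_{s_{ij}}$. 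The delicate point where the argument will stand or fall is verifying that $\circ'$ is genuinely associative and that no spurious derivation of~$x$ is possible other than from the intended one-per-$S_i$ selection; should this direct route run into fundamental associativity issues, the fallback is to simulate a $\Para[W]\Class{NL}$-machine directly via a path-composition semigroup on its configuration graph, in the spirit of the classical reduction of reachability to $\Lang{associative-generator}$.
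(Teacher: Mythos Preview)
Your outline matches the paper's: containment via the W-operator plus the classical $\Class{NL}$ algorithm, and hardness by instantiating Theorem~\ref{theorem-comp} to get completeness of $\PLang{family-union-associative-generator}$, followed by two further reductions down to the weighted version. The place where your proposal diverges from the paper is precisely the step you flag as delicate, and the mechanism you sketch there does not work as stated.

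For the family-to-subset reduction you propose tagging each candidate with its level~$i$ and having $\circ'$ send any product of two same-level candidates to an absorbing~$\bot$. The difficulty you anticipate is real: once two differently-tagged candidates are multiplied, the result must either retain enough tag information to detect later collisions (exponential blow-up) or forget it (and then collisions go undetected). The paper avoids this dilemma by a different idea: it does \emph{not} try to detect same-level collisions at all. Instead it adds fresh elements $e_1,\dots,e_k$ to the universe and to the candidate set, arranges that each instantiation coming from $S_i$ additionally selects~$e_i$, and replaces the target by a new element~$x'$ that can only be generated from $x$ together with \emph{all} of $e_1,\dots,e_k$. A $k$-element subset missing some level~$j$ then simply lacks~$e_j$ and cannot produce~$x'$. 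The associativity problem is now localised to defining $\circ'$ so that $x'$ arises exactly from the expression $x\circ' e_1\circ'\cdots\circ' e_k$; the paper solves this by passing to a \emph{string replacement system}: the new universe is a set of irreducible representatives of strings over $U\cup\{e_1,\dots,e_k,x',\texttt{error}\}$ modulo rewrite rules, and the operation is concatenation followed by reduction, which is automatically associative. A short case analysis shows the quotient has polynomial size.

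Your subset-to-weighted sketch also needs more than the analogy with Theorem~\ref{theorem-nc1}. In the $\Lang{bf}$ case one can substitute each variable by a disjunction of new variables; there is no comparable syntactic substitution inside a multiplication table. The paper instead introduces a \emph{selector} symbol~$\sigma_i$ for each $s_i\in S$, a unary \emph{counter} symbol, and an \emph{end} symbol, with rewrite rules that make a string like $\sigma_i\!\cdot\!\text{(counter)}^j$ evaluate to the $j$th generator selected by~$s_i$. Again the replacement-system trick guarantees associativity and polynomial size. This is the key technical device you are missing in both reductions; your fallback via a path-composition semigroup would give hardness, but not by the union-problem route.
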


\begin{proof}
  Clearly, $\PLang{agen} \in \Para[WN]\Class L$ since the
  nondeterministic bits provided by the W-operator suffice to describe
  the generator set and since testing whether a set is, indeed, a
  generator is well-known to lie in $\Class{NL}$.

  For hardness note that $\Lang{agen}$ is complete for $\Class{NL}$
  under compatible logspace projections, see~\cite{JonesLL1976}. By
  Theorem~\ref{theorem-comp} we then have that
  \PLangText{family-union-agen} is complete for
  $\Para[W]\Class{NL}$. We now show that this problem reduces to
  \PLangText{subset-union-agen}, which in turn reduces to
  \PLangText{weighted-union-agen}, i.\,e., $\PLang{agen}$.

  \def\error{\hbox{\tt error}}%

  \medskip\emph{Hardness of \PLangText{subset-union-agen}.} For the first
  reduction let the compatible sets $S_1,\dots,S_k$ be given as input.
  The template encodes a universe~$U$, a set of generator
  candidates~$C\subseteq U$, a target element~$x\in U$, and an
  operation~$\circ\colon U^2\to U$. The instantiations encode subsets
  of the generator candidates. Our aim is to 
  construct a new instance of \PLangText{subset-union-agen}, i.\,e., a single
  set~$S'$ of compatible strings encoding a universe~$U'$, a set of
  generator candidates~$C'\subseteq U'$, a target element~$x'$, and an
  operation~$\circ'$ such that there are $k$~elements of~$S'$ that
  induce a generating set for~$x'$ if, and only if, there are $k$
  elements~$s_i$, one from every~$S_i$, such that they induce a
  generating set of~$x$. To achieve this we first set
  $U'=U\cup\{e_1,\dots,e_k\}$ for new elements~$e_i$, one for each
  $S_i$, and also add them to the new set of generator candidates
  $C'=C\cup\{e_1,\dots,e_k\}$. We then augment the operation~$\circ$
  to $\circ'$ with respect to the new elements requiring that no $e_i$
  can be generated by any combination of two other elements from the
  universe and that no $e_i$ can be used to generate
  elements from the universe other than itself (we achieve this by
  actually using whole string as elements of our universe, as will be
  discussed later in this proof). Furthermore, we insert a new target
  element~$x'$ into the  
  universe. Our aim is to enforce that~$x'$ can only be generated via
  the expression~$x\circ' e_1\circ' e_2\circ'\dots\circ' e_k$.
  Finally, we add an \emph{error element}~$\error$ to the universe
  that we will use to create dead ends in the evaluation of
  expressions: Any expression that does not make sense or contains the
  error element is evaluated to $\error$.
  
  The set $S'$ then contains a string~$s_{ij}'$ for every $s_{ij}\in
  S_i$ that is essentially $s_{ij}$ adjusted to $U'$, $C'$, $x'$, and
  $\circ'$, where we require that the binary string that selects a set
  of generators from $C'$ also selects $e_i$ and no other of the
  introduced elements $e_j$. From this we have that there is a
  selection of $k$ elements of $S'$ that induces a set of generators
  whose closure contains $x,e_1,\dots,e_k$ and therefore also $x'$ if,
  and only if, there is a set of $k$~strings $s_i\in S_i$ describing a
  set of generators whose closure contains~$x$.

  Unfortunately, our operator~$\circ'$ is binary and, therefore, we
  cannot evaluate expressions like $x\circ' e_1\circ'
  e_2\circ'\dots\circ' e_k$ in a single step. Moreover, because of the
  required associativity of~$\circ'$, it has to be possible to
  completely evaluate any subexpression of a larger expression. To
  achieve this, we actually use strings as elements of our
  universe, instead of single symbols, that are evaluated ``as far as
  possible.'' For instance, the expression $a\circ' b\circ' c\circ'
  e_1\circ' e_2$ evaluates to $d\circ' e_1\circ' e_2$ if the
  expression $a\circ' b\circ' c$ evaluates to~$d$. Since $d\circ'
  e_1\circ' e_2$ cannot be evaluated further, we want the string $d
  e_1 e_2$ to be part of our universe. 
  
  To formalize the idea of ``strings evaluated as far as possible,'' we need some
  definitions. Given an alphabet $\Gamma$, let us call a set $R$ of rules
  of the form $w \to w'$ with $w,w' \in \Gamma^*$ a
  \emph{replacement system}. An \emph{application} of a rule $w \to
  w'$ takes a word $u w v$ and yields the word $u w' v$; we write $u w v
  \Rightarrow_R u w' v$ in this case. A word is \emph{irreducible} if no
  rule can be applied to it. Let $\equiv_R$ be the reflexive,
  symmetric, transitive closure of $\Rightarrow_R$. Given a word $u$,
  let $[u]_R = \{v \mid u \equiv_R v\}$ be the equivalence class
  of~$u$. We use $\Gamma^* /_{\equiv_R} = \{ [v]_R \mid v \in
  \Gamma^*\}$ to denote the set of all equivalence classes of
  $\Gamma^*$. Observe that we can define a natural concatenation
  operation $\circ_R$ on the elements of $\Gamma^* /_{\equiv_R}$: Let $[u]_R \circ_R
  [v]_R = [u\circ v]_R$. Clearly, this operation is well-defined and
  associative. An \emph{irreducible representative system} of $R$ is a
  set of irreducible words that contains exactly one word from each
  equivalence class in $(U')^* /_{\equiv_R}$.
  
  In the context of our reduction, $\Gamma$ will be $U'$ and $R$ contains
  the following rules: First, for elements $a,b,c \in U$ of the original
  universe $U$ with $a \circ b = c$, we have the rule $ab \to
  c$. Second, we have the rule $x e_1 \dots e_k \to x'$. Third, we
  have the rules $\error u \to \error$ and $u \error \to \error$ for
  all $u \in U'$. Fourth, we have the rules $e_i u \to \error$ for all
  $u \in U' \setminus \{e_{i+1}\}$ and $x' u \to \error$ for all $u
  \in U'$.

  We can now, finally, describe the sets to which the reduction
  actually maps an input $(U,\circ,x,C)$: The universe $U''$ is an
  irreducible representative system of~$R$, the operation $\circ''$
  maps a pair $(u,v)$ to the representative of $[u\circ v]_R$, let the
  target element $x''$ be the representative of $[x']_R$, and let
  $C''$ contain all representatives of  $[c]_R$ for $c \in C'\}$. 
  
  Our first observation is that $(U')^* /_{\equiv_R}$ (and hence also
  $U''$) has polynomial size: Consider any 
  $[w]_R$ and let $w$ be irreducible. If $w$ does not happen to the error symbol
  itself, it cannot contain the error symbol (by the third
  rule). Furthermore, in~$w$ there cannot be any element from~$U$ to
  the right of any~$e_i$ or of~$x'$ (by the fourth rule). Thus, it
  must be of the form   $w_1 w_2$ with $w_1 \in U^*$ and $w_2 \in
  \{e_1,\dots,e_k,x'\}^*$. Then $w_1$ must actually be a single letter
  (by the first rule) and $w_2$ must by $x'$ or a sequence $e_i e_{i+1}
  \dots e_j$ for some $i \le j$ (by the fourth rule). This shows that
  the total number of different equivalence classes is at most $1+
  |U'| (k^2+1)$.

  The second observation concerns the equivalence class of $[x']_R$,
  which contains the string $x e_1\dots e_k$. We can only generate this
  class from elements $[c]_R$ with $c \in C'$ if these elements
  include all $[e_i]_R$ and also the equivalence classes $[c]_R$ of
  elements $c \in C$ that suffice to generate~$x$. This shows the
  correctness of the reduction.

  \def\g{%
    \hbox to 2.5pt{\hfil\vrule width.6pt height.125ex depth.125ex\hfil}%
  }%
  \def\error{\hbox{\tt error}}%
  \def\endsymbol{\hbox{$\triangleleft$}}%

  \medskip\emph{Hardness of \PLangText{weighted-union-agen}.} Given a
  compatible set $S=\{s_1,\dots,\penalty0 s_k\}$ whose strings encode
  a universe~$U$, a set of generator candidates~$C\subseteq U$, an
  associative operation~$\circ\colon U^2\to U$, and a target
  element~$x\in U$, together with a selection of generator candidates,
  we have to construct an instance~$S'$ such that every string only
  selects a single generator candidate. To achieve this we construct a
  new universe $U'$ that contains the elements of the old universe~$U$
  with new elements described below. As in the previous reduction, we
  define reduction rules alongside these new elements and then use an
  irreducible representative system of the rules as our universe.

  \begin{enumerate}
    
  \item We have an error element $\error$ with similar rules as above.

  \item We have an \emph{end element} $\endsymbol$. No rule has
    $\endsymbol$ on its right-hand side. Therefore, $\endsymbol$ has
    to be an  element of any generating set~$G$. We require this
    element for technical reasons that we will discuss later. There
    are rules $\endsymbol u \to \error$ for all $u \in U'$.

  \item We have a \emph{counter element} $\g$. Like the
    end symbol, this symbol cannot be generated by any expression and
    has to be an element of any generating set.

  \item We have elements $\sigma_i$ for each $s_i\in S$, which we call
    \emph{selector elements}. The idea behind these elements is that
    we will use them together with the counter element to enumerate
    all the elements $u_1,\dots,u_l$ of the generator candidates
    selected by a string $s_i\in S$. The objective is that strings like
    $\sigma_i\g\g\g\g$ can be replaced by $u_4$. We will give rules
    for this in a moment.

  \end{enumerate}
  
  In our new template, the candidates are (the representatives of the
  equivalence classes of) the $\sigma_i$ as well as $\endsymbol$,
  $\g$, and $\error$. Now, there is a selection of
  $k+3$ elements of $S'$ that forms a generating set for the target
  element if, and only if, there is a selection of $k$ elements of~$S$
  that forms a generating set. 

  It remains to explain how rules can be set up such that
  $\sigma_i\g\g\g\g$ gets replaced by $u_4$. Consider the expression
  $\sigma_1\g\g \sigma_1\g \sigma_3\g\g\g  \sigma_2\g\g$. Here,
  $\sigma_1\g\g$ can be replaced by some $u$ and $\sigma_3\g\g$ by
  some $u'$, but $ \sigma_2\g\g$ cannot yet be replaced since it is
  not clear what element will be appended to the expression (if there
  is another element). To fix this, we use the end symbol~$\endsymbol$
  that has to be appended to every expression. It marks the right end
  of the expression and enforces the unambiguous evaluation of the
  very last subexpression and, therefore, the whole expression.
  Translated into rules, this means that if, for instance,
  $\sigma_i\g\g\g\g$ should select $u_4$, then we have rules like
  $\sigma_i\g\g\g\g u \to u_4u$ if $u \neq \g$, but do not have the
  rule $\sigma_i\g\g\g\g \to u_4$.
  
  The target is the irreducible string $x\endsymbol$. 
  
  Again, the number of equivalence classes is polynomially in the
  size of the universe. Therefore, the reduction can be computed in
  $\Para\Class L$ space. 
\end{proof}

With the machinery introduced in this section, this result may not
seem surprising: \textsc{as\-so\-ci\-a\-tive-generators} is known to be
complete for $\Class{NL}$ via compatible logspace reductions and,
thus, by Theorem~\ref{theorem-comp},
\PLangText{family-union-as\-so\-ci\-a\-tive-gen\-er\-a\-tors} is complete for 
$\Para[W]\Class{NL}$. To prove Theorem~\ref{theorem:agen} we
``just'' need to further reduce to the weighted union
version. However, unlike for satisfiability and graph problems, 
this reduction turns out to be technically difficult.

\section{Problems Complete for Time--Space Classes}
\label{section-space-time}

The classes $\Para\Class P = \Class{FPT}$ and $\Class{XL}$ appear to be
incomparable: Machines for the first class may use $f_x n^{O(1)}$ time
and as much space as they want (which will be at most $f_x n^{O(1)}$),
while machines for the second class may use $f_x \log n$ space and as
much time as they want (which will be at most $n^{f_x}$). A natural
question is which problems are in the intersection
$\Para\Class P \cap \Class{XL}$ or -- even better -- in the class
$\Class D[f \operatorname{poly}, f \log]$, which means that there is a
\emph{single} machine using only fixed-parameter time and slice-wise 
logarithmic space simultaneously.

It is not particularly hard to find artificial problems that are
complete for the different time--space classes introduced in
Section~\ref{section-intro-ftpxp}; we present such problems at the
beginning of this section. We then move on to automata problems, but still
some ad hoc restrictions are needed to make the problems complete for
time--space classes. The real challenge lies in finding 
problems together with \emph{natural} parameterization that are
complete. We present one such problem: the longest common subsequence
problem parameterized by the number of strings. 

\paragraph{Resource-Bounded Machine Acceptance.}
A good starting point for finding complete problems for new classes is
typically some variant of Turing machine acceptance (or
halting). Since we study machines with simultaneous time--space 
limitations, it makes sense to start with the following ``time and
space bounded computation'' problems: For $\Lang{dtsc}$ the input is a
single-tape \textsc{dtm} $M$ together with two numbers $s$ and $t$
given in unary.  The question is whether $M$ accepts the empty string
making at most $t$ steps and using at most $s$ tape cells. The problem
$\Lang{ntsc}$ is the nondeterministic variant. As observed by Cai et
al.~\cite{CaiCDF1997}, the fpt-reduction closure of $\PLang[\mathit t]{ntsc}$
(that is, the problem parameterized by $t$) is exactly $\Class
W[1]$. In analogy, Guillemot~\cite{Guillemot2011} proposed the name
``$\Class{WNL}$'' for the fpt-reduction closure of $\PLang[\mathit
s]{ntsc}$ (now parameterized by~$s$ rather than~$t$). As pointed out in
Section~\ref{section-justification-w}, we 
believe that this name should be reserved for the class resulting from
applying the operator $\exists^{\leftrightarrow}_{f\log}$ to the class
$\Class{NL}$. Furthermore, the following theorem shows that
$\PLang[\mathit s]{ntsc}$ is better understood in terms of time--space
classes:

\begin{theorem}\label{theorem-dtsc}
  The problems $\PLang[\mathit s]{dtsc}$ and $\PLang[\mathit s]{ntsc}$ are
  complete for the classes $\Class D[f\operatorname{poly},f\log]$
  and $\Class N[f\operatorname{poly},f\log]$, respectively. 
\end{theorem}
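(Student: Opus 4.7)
My plan is to establish both containment and hardness by direct Turing machine simulation, with the only nontrivial ingredient being an alphabet-compression trick used in the hardness direction.

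For containment of $\PLang[\mathit s]{dtsc}$ in $\Class D[f\operatorname{poly},f\log]$, consider an input $(M,1^s,1^t)$ of total length $n'$. A universal simulator maintains $M$'s current configuration: the simulated state (constant size), the tape-head position (at most $\log s \le \log n'$ bits), the step counter up to $t$ (at most $\log n'$ bits), and the simulated work tape of $s$ cells, each storing one symbol of $M$'s alphabet $\Sigma_M$ in $O(\log|\Sigma_M|)=O(\log n')$ bits since $\Sigma_M$ is part of the input. The total space is $s\cdot O(\log n')+O(\log n')$, which is $f(s)\cdot O(\log n')$ for $f(s)=s+1$. Each simulation step requires scanning $M$'s transition table in $O(n')$ time, and there are at most $t\le n'$ steps, giving total time $\operatorname{poly}(n')=f(s)\cdot\operatorname{poly}(n')$. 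The same simulator, merely allowing a nondeterministic transition choice per step, puts $\PLang[\mathit s]{ntsc}$ into $\Class N[f\operatorname{poly},f\log]$.

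For hardness, let $(Q,\kappa)\in\Class D[f\operatorname{poly},f\log]$ be witnessed by a machine $M_Q$ with work-tape alphabet $\Gamma$ of constant size that, on input $x$ with $n=|x|$ and $k=\kappa(x)$, runs in time $f(k)\cdot n^{O(1)}$ and space $cf(k)\log n$ for some constant $c$. Given $x$, the reduction outputs a triple $(M_x,1^{s_x},1^{t_x})$ where $M_x$ simulates $M_Q$ on $x$ via the following alphabet-compression trick: partition $M_Q$'s work tape into consecutive blocks of $\log n$ cells and let $M_x$'s tape alphabet be $\Gamma^{\log n}$, which has size $|\Gamma|^{\log n}=n^{O(1)}$. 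Each cell of $M_x$ then encodes an entire block of $M_Q$'s work tape, so the $cf(k)\log n$ work cells of $M_Q$ become only $s_x=cf(k)+O(1)$ cells of $M_x$. The states of $M_x$ carry a triple $(q,i,j)$, where $q$ is $M_Q$'s current state, $i\in\{1,\dots,n\}$ is the position of $M_Q$'s input head, and $j\in\{0,\dots,\log n - 1\}$ is the position within the currently scanned block; there are $n^{O(1)}$ such states. Because $i$ is part of the state, the input symbol $x_i$ can be ``hardcoded'' into $M_x$'s transition table rather than read from a tape. Each step of $M_Q$ is then simulated by $O(1)$ steps of $M_x$ (with $M_x$'s head moving to an adjacent block precisely when $M_Q$ crosses a block boundary, governed by $j$), so $t_x=O(f(k)\cdot n^{O(1)})$ steps suffice.

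Setting the output parameter to $s_x$ yields $\kappa_2(M_x,1^{s_x},1^{t_x})=s_x\le g(k)$ for $g(k)=cf(k)+O(1)$, as required. Both $s_x$ and $t_x$ are polynomial in $n$, so they can be written in unary, and the description of $M_x$ (with $n^{O(1)}$ states, $n^{O(1)}$ alphabet symbols, and a transition table whose entries are each computable in $O(\log n)$ space from $x$ and from the description of $M_Q$) also has polynomial size; hence the reduction is in $\Para\Class L$ (even pure $\Class L$). The identical construction applied to a nondeterministic $M_Q$ gives hardness of $\PLang[\mathit s]{ntsc}$ for $\Class N[f\operatorname{poly},f\log]$. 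The main obstacle is the parameter constraint $s_x\le g(k)$, since $M_Q$ inherently uses $\Theta(f(k)\log n)$ space; this is exactly what alphabet compression circumvents, by moving the $\log n$ factor out of the cell count and into the tape alphabet of $M_x$, which is allowed to be polynomially large because it is part of the input of $\Lang{dtsc}$ and $\Lang{ntsc}$.
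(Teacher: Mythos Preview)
Your proof is correct and follows essentially the same approach as the paper: direct simulation for containment (with the key observation that each simulated tape symbol needs $O(\log n)$ bits because $M$'s alphabet is part of the input), and for hardness the two standard tricks of encoding the input-head position in the state of the single-tape machine and compressing $\log n$ work-tape cells into one via the enlarged alphabet $\Gamma^{\log n}$. The only minor slip is calling the simulated state ``constant size'' in the containment argument---since $M$ is part of the input its state index needs $O(\log n')$ bits---but this is already absorbed by your $+\,O(\log n')$ term and does not affect the bound.
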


\begin{proof}
  We only prove the claim for the deterministic case, the
  nondeterministic case works exactly the same way. For containment,
  on input of a machine $M$, a time bound $t$ in unary, and a space
  bound $s$ in unary, a  $\Class
  D[f\operatorname{poly},f\log]$-\penalty0machine can simply simulate
  $M$ for $t$ steps, making sure that no more than $s$ tape cells are
  used. Clearly, the time needed for this simulation is a fixed
  polynomial in $t$ and, hence, in the input length. The space needed
  to store the $s$ tape cells is clearly $O(s \log n)$ since $O(\log
  n)$ bits suffice to store the contents of a tape cell (the amount
  needed is not $O(1)$ since the tape alphabet is part of the input).

  For hardness, consider any problem $(Q,\kappa) \in \Class
  D[f\operatorname{poly},f\log]$ via some machine~$M$. Let $t_M(k,n)$
  and $s_M(k,n)$ be the time and space bounds of $M$, respectively.
  The reduction must now map inputs $x$ to triples $(M',1^t,1^s)$. The
  reduction faces two problems: First, while $M$ has an input tape and
  a work tape, $M'$ has no input tape and starts with the empty
  string. Second, while $t$ can simply be set to $t_M(\kappa(x),x)$,
  $s$ cannot be set to $s_M(\kappa(x),x)$ since this only lies in
  $O\bigl(f(\kappa(x)) \log 
  |x|\bigr)$ for some function~$f$ -- while in a parameterized reduction the
  new parameter may only depend on the old one ($\kappa(x)$) and not
  on the input length. The first problem can be overcome
  using a standard trick: $M'$ simulates $M$ and uses its tape to
  store the contents of the work tape of~$M$. Concerning the input
  tape (which $M'$ does not have), when $M$ accesses an
  input symbol, $M'$ has this symbol ``stored in its state,'' which
  means that there are $|x|$ many copies of $M$'s state set inside
  $M'$, one for each possible position of the head on the input
  tape. A movement of the head corresponds to switching between these
  copies. In each copy, the behaviour of the machine $M$ for the
  specific input symbol represented by this copy is hard-wired.

  The second problem is a bit harder to tackle, but one can also apply
  standard tricks. Instead of mapping to~$M'$, we actually map to a new
  machine $M''$ that performs the following space compression trick: For each
  $\log_2 |x|$ many tape cells of $M'$, the machine $M''$ uses only
  one tape cell. This can be achieved by enlarging the tape alphabet
  of $M'$: If the old alphabet was $\Gamma$, we now use
  $\Gamma^{\log_2 |x|}$, which is still polynomial in
  $|x|$. Naturally, we now have to adjust the transitions and states
  of $M'$ so that a step of $M'$ for its old tape is now appropriately
  simulated by one step of $M''$ for its compressed tape.

  Taking it all together, we map $x$ to $(M'', t,s)$ where $t$ is as
  indicated above and $s = s_M(\kappa(x),x) / \log_2 |x|$, which is
  bounded by a function depending only on $\kappa(x)$. Clearly, the
  reduction is correct.
\end{proof}

\paragraph{Automata.} A classical result of
Hartmanis~\cite{Hartmanis1972} states that $\Class L$ contains exactly 
the languages accepted by finite multi-head automata. In
\cite{ElberfeldST2012}, Elberfeld et al.\ used this to show that
$\PLang[heads]{mdfa}$ (the multi-head automata acceptance 
problem parameterized by the number of heads) is complete for
$\Class{XL}$. It turns out that multi-head automata can also be used
to define a (fairly) natural complete problem for $\Class D[f
\operatorname{poly}, f\log]$: A \emph{\textsc{dag}-automaton} is an
automaton whose transition graph is a
topologically sorted \textsc{dag} (formally, the states must form the
set $\{1,\dots, |Q|\}$ and the transition function must map each
state to a strictly greater state). Clearly,
a \textsc{dag}-automata will never need more than $|Q|$ steps to
accept a word, which allows us to prove the following theorem: 

\begin{theorem}\label{theorem-multi}
  The problems $\PLang[heads]{dag-mdfa}$ and $\PLang[heads]{dag-mnfa}$
  are complete for the classes $\Class D[f
  \operatorname{poly}, f\log]$ and $\Class N[f \operatorname{poly},
  f\log]$, respectively.
\end{theorem}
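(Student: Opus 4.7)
For containment, we simulate the DAG-multi-head automaton step by step. Because the transition function strictly increases the state index, any run terminates after at most $|Q|$ steps, which is polynomial in the input size. A configuration is a tuple of one state and $k$ head positions, taking $O(\log|Q|)+k\cdot O(\log n)=O(k\log n)$ bits, and each transition is looked up in the (polynomially large) transition table in polynomial time. Hence $\PLang[heads]{dag-mdfa}\in\Class D[f\operatorname{poly},f\log]$ and, by replacing the deterministic simulation with an existential guess of the next state, $\PLang[heads]{dag-mnfa}\in\Class N[f\operatorname{poly},f\log]$.

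For hardness we reduce from $\PLang[\mathit s]{dtsc}$ (respectively $\PLang[\mathit s]{ntsc}$), which are complete for the target classes by Theorem~\ref{theorem-dtsc}. Given an input $(M,1^t,1^s)$ with parameter $s$, we may assume via the usual space-compression trick that the tape alphabet of $M$ is $\{0,1\}$ and that $M$ uses exactly $s$ cells. The reduction outputs the input word $w=\#01\#01\cdots\#01\#$, consisting of $s$ two-symbol blocks, together with a DAG-DFA $A$ over the alphabet $\{0,1,\#\}$ with one head $h_i$ per block. The invariant is that $h_i$ pointing to the `$0$' or `$1$' symbol of the $i$th block encodes the current contents of $M$'s $i$th tape cell; initially every head sits on its `$0$'. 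The state of $A$ encodes the tuple (major step $\tau\in\{0,\dots,t\}$, sub-step, current state of $M$, current head position $j\in\{1,\dots,s\}$ of $M$), so $|Q|$ is polynomial in the input size. Ordering the states lexicographically by $(\tau,\text{sub-step})$ produces the strictly increasing DAG structure required by the definition. To simulate one step of $M$, $A$ reads the entire $s$-tuple of symbols under its heads, uses $j$ from the state to pick out the symbol in the active cell, applies $M$'s transition (hard-wired into $A$), and in $O(1)$ micro-steps shifts $h_j$ one position left or right to encode the newly written bit while all other heads remain stationary. $A$ accepts precisely when the simulated $M$ reaches an accepting state. The nondeterministic variant is identical except that $A$'s next-state relation becomes multi-valued in lockstep with $M$.

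The main obstacle is keeping the description size of $A$ fpt-bounded. Storing $M$'s head position inside $A$'s state rather than in an additional ``pointer'' head is crucial: were we to instead decode the pointer's location by reading, the alphabet of $A$ would have to grow with~$n$ and the transition table, of size $|Q|^2\cdot|\Sigma|^k\cdot 3^k$, would blow up to $n^{\Omega(s)}$, which is not fpt-sized. With the present setup $|\Sigma|=3$ and $k=s$, so the table size is $n^{O(1)}\cdot 3^{O(s)}=f(s)\cdot n^{O(1)}$; each entry can be computed from the relevant $M$-transition by a straightforward logspace lookup, so the whole reduction is para-$\Class L$ as required.
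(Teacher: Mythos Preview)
Your containment argument is fine, and your idea of enforcing the DAG structure by indexing states with a step counter $\tau\in\{0,\dots,t\}$ is essentially the ``layering'' trick the paper also uses. The gap is in the hardness reduction, specifically in the sentence ``we may assume via the usual space-compression trick that the tape alphabet of $M$ is $\{0,1\}$.''

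This normalization is not available. In $\PLang[\mathit s]{dtsc}$ the machine $M$ is part of the input, so its tape alphabet $\Gamma_M$ can have size up to $n$. Converting $M$ to a binary tape alphabet does not compress space, it \emph{expands} it: each cell becomes $\lceil\log|\Gamma_M|\rceil$ cells, so the parameter $s$ becomes $s\cdot\lceil\log|\Gamma_M|\rceil$. Since $|\Gamma_M|$ depends on the input length, the new parameter is not bounded by any function of the old one, and you no longer have a para-$\Class L$ reduction. Worse, large $|\Gamma_M|$ is not an artefact you can sidestep: the completeness proof of Theorem~\ref{theorem-dtsc} \emph{relies} on blowing up the tape alphabet to size roughly $n$ in order to push the $\log n$ factor out of the space bound and into the parameter. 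So the hard instances of $\PLang[\mathit s]{dtsc}$ genuinely have $|\Gamma_M|$ polynomial in~$n$. Your encoding, in which each of the $s$ heads sits on a `$0$' or `$1$' and thus stores one bit, can represent only $s$ bits of tape content, whereas $M$'s tape carries $s\cdot\log|\Gamma_M|$ bits; and if you instead let the automaton's input alphabet contain $\Gamma_M$, the transition table grows to $|\Gamma_M|^{s}=n^{\Omega(s)}$, which is exactly the blow-up you were trying to avoid.

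The paper's proof circumvents this by not reducing from $\PLang[\mathit s]{dtsc}$ at all. It starts from an arbitrary $(Q,\kappa)\in\Class D[f\operatorname{poly},f\log]$ and invokes the Hartmanis-style simulation underlying the $\Class{XL}$-completeness of $\PLang[heads]{mdfa}$: there the tape contents are encoded in the \emph{positions} of the heads (each head position carries $\log n$ bits of information) and are recovered not by reading a symbol but by a ``complicated ballet'' of head comparisons. Crucially, each step of $M$ is simulated by a \emph{fixed polynomial} number of automaton steps, independent of the number of heads; one then makes $f'_x\cdot n^{O(c)}$ layered copies of the automaton to obtain the DAG structure. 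Your direct symbol-reading shortcut cannot replace this dance once the tape alphabet is unbounded.
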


\begin{proof}
  We only prove the claim for deterministic automata, the argument
  works exactly the same way for nondeterministic ones. Let
  $(Q,\kappa) \in \Class D[f \operatorname{poly},f \log]$ via
  some~$M$. Then $(Q,\kappa) \in \Class{XL}$ will hold via the same 
  machine~$M$. In \cite{ElberfeldST2012} it is shown that every problem in
  $\Class{XL}$ can be reduced to $\PLang[heads]{mdfa}$ via a simulation
  dating back to the work of Hartmanis~\cite{Hartmanis1972}: Each step
  of~$M$ is simulated by a number of movement of the heads of an
  automaton~$A$. The positions of a fixed number of heads of~$A$ 
  store the contents of one work tape symbol and for each step of~$M$
  the heads of~$A$ perform a complicated ballet to determine the current
  contents of certain tape cells and to adjust the heads accordingly. 

  For our purposes, it is only important that each step of $M$ gives
  rise to a polynomial number (in the input length) of steps of $A$
  for some fixed polynomial independent of the number of heads. In
  particular, to simulate $f(\kappa(x)) n^c$ steps of~$M$, the
  automaton~$A$ needs to perform $f'(\kappa(x)) n^{O(c)}$ 
  steps. Thus, to reduce $(Q,\kappa)$ to $\PLang[heads]{dag-mdfa}$, we
  first compute $A$ as in the reduction to $\PLang[heads]{mdfa}$, but
  make $f'(\kappa(x)) n^{O(c)}$ copies of~$A$. We then modify the
  transitions such that when a transition in the $i$th copy $A$ maps a
  state $q$ to a state $q'$, the transition then instead maps this $q$ to
  $q'$ from the $(i+1)st$ copy, instead. Clearly, the resulting
  automaton is a \textsc{dag}-automaton and it accepts an input word
  if, and only if, $M$ accepts it in time $f(\kappa(x)) n^c$ and using
  only $f(\kappa(x)) \log n$ space.
\end{proof}

Instead of \textsc{dag}-automata, we can also consider a ``bounded
time version'' of $\Lang{mdfa}$ and $\Lang{mnfa}$, where we ask whether
the automaton accepts within $s$ steps ($s$ being given in
unary). Both versions are clearly equivalent: 
The number of nodes in the \textsc{dag} bounds the number
of steps the automaton can make and cyclic transitions graphs can be
made acyclic by making $s$ layered copies.

Another, rather natural kind of automata are \emph{cellular automata,}
where there is one instance of the automaton (called a \emph{cell})
for each input symbol. The cells perform individual synchronous computations, but
``see'' the states of the two neighbouring cells (we only
consider one-dimensional automata, but the results hold for any fixed
number of dimensions). Formally, the transition function of such an
automaton is a function $\delta \colon Q^3 \to Q$ (for the cells at
the left and right end this has to be modified appropriately). The
``input'' is just a string $q_1\dots q_k \in Q^*$ of states and the
question is whether $k$ cells started in the states $q_1$ to $q_k$
will arrive at a situation where one of them is in an accepting state
(one can also require all to be in an accepting state, this makes no
difference).

Let $\Lang{dca}$ be the language $\bigl\{(C,q_1\dots q_k) \mid C$ is a 
deterministic cellular automaton that accepts $q_1\dots
q_k\smash{\bigr\}}$. Let $\Lang{nca}$ denote the nondeterministic version and let 
$\Lang{dag-dca}$ and $\Lang{dag-nca}$ be the versions where $C$ is
required to be a \textsc{dag}-auto\-ma\-ton (meaning that $\delta$ must
always output a number strictly larger than all its inputs). The
following theorem states the complexity of the resulting problems when
we parameterize by $k$ (number of cells): 

\begin{theorem}\label{theorem-cellular}
  The problems $\PLang[cells]{dca}$ and $\PLang[cells]{nca}$ are
  complete for $\Class{XL}$ and $\Class{XNL}$, respectively. The
  problems $\PLang[cells]{dag-dca}$ and $\PLang[cells]{dag-nca}$ are
  complete for $\Class D[f \mathrm{poly},f \log]$ and $\Class N[f
  \mathrm{poly},f \log]$, respectively.
\end{theorem}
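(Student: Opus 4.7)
The plan is to handle containment and hardness separately for each pair of classes, and to deduce all four statements from a single chunking simulation whose DAG variant is obtained by a time-stamping trick.

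For containment in $\Class{XL}$ and $\Class{XNL}$, note that a cellular automaton with $k$ cells over an alphabet $Q$ that is part of the input has at most $|Q|^k\le n^k$ global configurations, each describable in $k\log_2|Q|=O(\log n)$ bits when $k$ is fixed. A slice-wise logspace simulator stores the current configuration together with a step counter, applies the transition function one cell at a time to produce the successor configuration, and accepts if an accepting configuration is reached before the counter exceeds $n^k$; the $\Class{XNL}$ case simply uses nondeterminism instead. For the DAG variants, the strict-increase property forces each cell's state to grow after every step, so the computation halts after at most $|Q|\le n$ steps; the resulting simulator uses $O(k\log n)=O(f(\kappa(x))\log n)$ space and $O(k|Q|)=f(\kappa(x))\cdot\mathrm{poly}(n)$ time, giving membership in $\Class D[f\mathrm{poly},f\log]$ and $\Class N[f\mathrm{poly},f\log]$.

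For hardness of $\PLang[cells]{dca}$ for $\Class{XL}$, take a deterministic machine $M$ witnessing $(Q',\kappa)\in\Class{XL}$ that uses $f(\kappa(x))\log_2 n$ work-tape cells on input $x$. Build a cellular automaton by chunking: allocate one CA cell per $\log_2 n$-bit block of $M$'s work tape, so the cell count $k'=f(\kappa(x))+O(1)$ depends only on $\kappa(x)$. One cell is always the \emph{active} cell and carries $M$'s control state, the input-head position, and the within-block position of $M$'s work head; the remaining cells merely store the current contents of their block. Because the reduction knows $x$, the symbol under $M$'s input head can be hard-wired into the transition function, so it need not be transmitted between cells. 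A CA step replays a single step of $M$: the active cell updates its chunk and its bookkeeping, and when $M$'s work head crosses a block boundary the active marker is passed to the corresponding neighbour in the same CA step, which is possible because cells see their neighbours' states in $\delta$. The alphabet is polynomial in $n$, so the transition table has polynomial size and can be produced by a $\Para\Class L$-reduction. Switching $M$ and the CA to nondeterministic variants gives hardness of $\PLang[cells]{nca}$ for $\Class{XNL}$.

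For hardness of the DAG variants, begin with a machine $M$ witnessing $\Class D[f\mathrm{poly},f\log]$ (respectively $\Class N[f\mathrm{poly},f\log]$), whose runtime is $T=f(\kappa(x))\cdot\mathrm{poly}(n)$. Apply the same chunking construction, but additionally time-stamp states: each cell's state becomes a pair $(q,t)$ with $1\le t\le T$, and every transition increments $t$. Ordering states lexicographically with $t$ as the primary key makes every transition strictly increasing, so the automaton is a DAG automaton; since $T$ is polynomial in $n$, the alphabet stays polynomial and the reduction stays in $\Para\Class L$. The main obstacle throughout is the handoff of the active marker between neighbouring cells while keeping the alphabet polynomial and the reduction logspace-computable; once this is carried out for the unrestricted case, the DAG cases follow immediately from the time-stamping trick.
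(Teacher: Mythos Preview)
Your containment arguments and the deterministic hardness reduction are fine, and the time-stamping trick for the \textsc{dag} variants is exactly what the paper does (the paper phrases it as making $t$ layered copies of the automaton). Your route to hardness differs from the paper's: the paper first isolates generic machine-acceptance problems $\PLang[\mathit s]{dsc}$ and $\PLang[\mathit s]{dtsc}$ as complete for the target classes (absorbing the space-compression chunking there), and then reduces those to the cellular-automaton problems by the very clean ``one cell per tape cell'' encoding with state set $(Q_M\cup\{\bot\})\times\Gamma_M$. You instead fold the chunking directly into the CA construction and hard-wire the input into~$\delta$. Both are legitimate; the paper's detour buys a simpler CA, while yours avoids the auxiliary completeness lemma.

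There is, however, a real gap in your nondeterministic case. The sentence ``switching $M$ and the CA to nondeterministic variants gives hardness'' does not survive your one-step handoff. At a block boundary, cell~$i$ and cell~$i{+}1$ each compute their next state from the \emph{old} neighbourhood and each makes an \emph{independent} nondeterministic choice. If $M$ may either stay in block~$i$ or cross into block~$i{+}1$, cell~$i$ can choose ``stay'' while cell~$i{+}1$ simultaneously chooses ``receive the marker,'' yielding two active cells; dually, both may relinquish and the marker is lost. Such inconsistent branches need not correspond to any run of~$M$ and can lead to spurious acceptance, so correctness of the reduction is not established. The paper flags precisely this issue and fixes it by a commit phase: the head-carrying cell first nondeterministically tags its state with a bit encoding the intended branch, and in the \emph{next} step all three relevant cells update deterministically from that tag. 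You should insert the same two-phase handoff (it costs only a constant blow-up in states and steps, so your alphabet and time bounds are unaffected); with that patch your direct construction goes through, and the \textsc{dag} case then follows from your time-stamping exactly as you say.
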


\begin{proof}
  We start with containment and then prove hardness for all problems.
  
  \medskip\emph{Containment.} Clearly, $\PLang[cells]{dca}$ lies in 
  $\Class{XL}$ since we can keep track of the $k$ states of the $k$
  cells in space $O(k \log n)$. To see that
  $\PLang[cells]{dag-dca} \in \Class D[f \operatorname{poly},\penalty100 f\log]$,
  just observe that for \textsc{dag}-automata no computation can take
  more than a linear number of steps. The arguments for the
  nondeterministic versions are the same. 
  
  \medskip\emph{Hardness for Deterministic Cellular Automata.} To
  prove hardness of $\PLang[cells]{dca}$ for $\Class{XL}$, we reduce
  from a canonically complete problem for 
  $\Class{XL}$. Such a problem can easily be obtained from
  $\PLang[\mathit s]{dtsc}$ by 
  lifting the restriction on the time allowed to the machine, leading
  to the following problem:
  
  \begin{problem}[{$\PLang[\mathit s]{deterministic-space-bounded-computation}$
    ($\PLang[\mathit s]{dsc}$)}]
    \begin{parameterizedproblem}
      \instance (The code of) a single-tape machine $M$, a number $s$.
      \parameter $s$.
      \question Does $M$ accept on an initially empty tape using at
      most $s$ tape cells?
    \end{parameterizedproblem}
  \end{problem}

  Proving that this problem is complete for $\Class{XL}$ follows
  exactly the same argument as that used in
  Theorem~\ref{theorem-dtsc}.

  Let us now reduce $\PLang[\mathit s]{dsc}$ to
  $\PLang[cells]{dca}$. The input for the reduction is a pair  
  $(M,s)$. We must map this to some cellular automaton $C$ and an
  initial string of states. The obvious idea is to have one automaton
  for each tape cell that can be   used by~$M$. In detail, let $Q$ be
  the set of states of $M$ and let 
  $\Gamma$ be the tape alphabet of $M$. The state set of $C$ will be
  $R = (Q \cup \{\bot\}) \times \Gamma$, where $\bot$ is used to
  indicate that the head is elsewhere. Clearly, a state string from
  $R^s$ allows us to encode a configuration of~$M$. Furthermore, we
  can now set up the transition relation of~$C$ 
  in such a way that one parallel step of the $s$ automata corresponds
  exactly to one computational step of~$M$: as an example, suppose in
  state $q$ for the symbol $a$ the machine $M$ will not move its head,
  write~$b$, and switch to state $q'$. Then in $C$ for every $x,y \in
  \{\bot\} \times \Gamma$ there would be a transition mapping $(x,
  (q,a), y)$ to $(q',b)$ and also transitions mapping $((q,a),x,y)$ 
  to~$x$ and $(x,y,(q,a))$ to~$y$. For triples corresponding to
  situations that ``cannot arise'' like the head being in two places
  at the same time, the transition function can be setup arbitrarily. 
  The initial string of states for the cellular automaton is of course
  $(q_0,\Box)(\bot,\Box)\dots(\bot,\Box)$, where $\Box$ is the
  blank symbol and $q_0$ is the initial state of~$M$.

  With this setup, the strings of states of
  the cellular automaton are in one-to-one correspondence with the
  configurations of~$M$. In particular, we will reach a state string
  containing an accepting state if, and only if, $M$ accepts when
  started with an empty tape. Clearly, the reduction is a para-$\Class
  L$-reduction.

  \medskip\emph{Hardness for Nondeterministic Cellular Automata.} One
  might expect that one can use the exact same argument for
  nondeterministic automata and simply use the same reduction, but
  starting from $\PLang[\mathit s]{nsc}$. However, there is a
  complication: The cells work independently of one another. In
  particular, there is no guarantee that a nondeterministic decision
  taken by one cell is also 
  take by a neighboring cell. To illustrate this point, consider the
  situation where the machine $M$ can
  nondeterministically step ``left or right'' in some state~$q$. Now
  assume that some cell $c$ is in state $(q,x)$ and consider the cells
  $c-1$ and $c+1$. For both of them, there would now be a transition
  allowing them to ``take over the head'' and both could
  nondeterministically decide to do so -- which is wrong, of course;
  only one of them may receive the head.

  To solve this problem, we must ensure that a nondeterministic
  decision is taken ``by only one cell.'' Towards this aim, we
  first modify~$M$, if necessary, so that every nondeterministic
  decision is a binary decision. Next, we change the state set of~$C$: 
  Instead of $(Q \cup \{\bot\}) \times \Gamma$ we use $(Q \times
  \{0,1\} \cup Q \cup \{\bot\}) \times
  \Gamma$. In other words, in addition to the normal states from $Q$
  we add two copies of the state set, one tagged with~$0$ and one
  tagged with~$1$. The idea is that when a cell is in state $(q,x) \in
  Q \times \Gamma$, it can nondeterministically reach $((q,0),x)$ or
  $((q,1),x)$. However, from those states, we can
  \emph{deterministically} make the next step: if the state is tagged
  by $0$, both the cell and the neighboring cells continue according
  to what happens for the first of the two possible nondeterministic
  choices, if the state is tagged by $1$, the other choice is
  used. Note that as long as the state is not yet tagged, the
  neighboring cells do not change their state.

  With these modifications, we arrive at a new cellular automaton with
  the property that after every two computational steps of the
  automaton its string of states encodes one of the two possible next
  computational steps of the machine~$M$. This shows that the reduction
  is correct.

  \medskip\emph{Hardness for Cellular \textsc{dag}-Automata.} To prove
  hardness of $\PLang[cells]{dag-dca}$ for the class $\Class D[f
  \operatorname{poly}, f\log]$, we reduce form $\PLang[\mathit
  s]{dtsc}$, which is complete for the class by
  Theorem~\ref{theorem-dtsc}. On input $(M,1^t,1^s)$, the reduction is
  initially exactly the same as for $\PLang[cells]{dca}$ and we just
  ignore the time bound~$t$. Once an automaton $C$ has been computed,
  we can turn it into a \textsc{dag}-automaton and incorporate the
  time bound as follows: We create $t$ many copies of $C$ and
  transitions that used to be inside one copy of~$C$ now lead to the
  next copy (this is same idea as in the proof of
  Theorem~\ref{theorem-multi}). This construction ensures that the
  automaton will accept the initial sequence if, and only if, $M$
  accepts on an empty input tape in time~$t$ using space~$s$.

  For the nondeterministic case, we combine the constructions we
  employed for $\PLang[cells]{dag-dca}$ and for $\PLang[cells]{nca}$.
\end{proof}

We remark that, for once, the nondeterministic cases need special arguments.

\paragraph{Pebble Games.}

Pebble games are played on graphs on whose vertices we place pebbles 
(a \emph{pebbling} is thus a subset of the set of 
vertices) and, over time, we (re)move and add pebbles according to 
different rules. Depending on the rules and the kind of graphs, the
resulting computational problems are 
complete for different complexity classes, which is why pebble games
have received a lot of attention in the literature. We 
introduce a simple pebble game played by a single player, different
versions of which turn out to be complete for different parameterized
space complexity classes: A \emph{threshold pebble game
  (\textsc{tpg})} consists of a directed  graph $G= (V,E)$ together
with a threshold function $t \colon V \to \mathbb N$. Given a pebbling
$X \subseteq V$, a vertex~$v$ \emph{can be pebbled after~$X$} if the
number of $v$'s pebbled predecessors is at least $v$'s threshold, that
is, $\bigl| \{\, p \mid (p,v) \in E\} \cap X\bigr| \ge 
t(v)$. Given a 
pebbling~$X$, \emph{a next pebbling} is any set $Y$ of vertices that can
be pebbled after~$X$. The \emph{maximum next pebbling} is the maximum
of such~$Y$. 

The language $\Lang{tpg}$ contains all threshold pebble games together
with two pebblings $S$ and $T$ such that we can reach $T$ when we
start with $S$ and apply the next pebbling operation repeatedly
(always replacing the current pebbling $X$ completely by~$Y$). For the
$\Lang{tpg-max}$ problem, $Y$ is always chosen as the maximum next
pebbling (which makes the game deterministic). For $\Lang{dag-tpg}$
and $\Lang{dag-tpg-max}$, the graph is restricted to be a
\textsc{dag}. In the following theorem, we parameterize by  the
maximum number of pebbles that may be present in any step. 

\begin{theorem}\label{theorem-tpg1}
  The problems $\PLang[pebbles]{tpg-max}$  and $\PLang[pebbles]{tpg}$
  are complete for $\Class{XL}$ and $\Class{XNL}$, respectively.  
  The problems $\PLang[pebbles]{dag-tpg-max}$  and $\PLang[pebbles]{dag-tpg}$
  are complete for $\Class D[f \mathrm{poly},f \log]$
  and $\Class N[f \mathrm{poly},f \log]$,  respectively. 
\end{theorem}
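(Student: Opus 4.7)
The plan is to pursue a direct simulation of cellular automata and to combine this with a careful use of the pebble cap to control the nondeterministic case.

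For containment I would keep the current pebbling in $O(k \log n)$ bits: since at most $k$ pebbles are present at any time by hypothesis, the pebbling is just a list of $k$ vertex names. Given this pebbling, I can check for each vertex $v$ whether the number of pebbled predecessors of $v$ is at least $t(v)$ in additional logarithmic space by scanning the graph. Iterating this step-by-step simulation yields $\PLang[pebbles]{tpg-max} \in \Class{XL}$ and, with nondeterministic guessing of the next pebbling, $\PLang[pebbles]{tpg} \in \Class{XNL}$. For the DAG variants I would additionally observe that the pebble wave in a topologically sorted DAG can ``usefully'' advance only in topological order, so any successful trajectory can be compressed to polynomial length; this supplies the extra time bound $f\operatorname{poly}$ and places the problems in $\Class D[f\operatorname{poly}, f\log]$ and $\Class N[f\operatorname{poly}, f\log]$.

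For hardness I would reduce from the four cellular-automata problems provided by Theorem~\ref{theorem-cellular}. Given an automaton $C$ with $k$ cells, state set $Q$, transition relation $\delta \subseteq Q^3 \times Q$, and initial states $q_1^0, \dots, q_k^0$, I would introduce a \emph{state vertex} $v_{i, q}$ for every cell $i$ and state $q$, together with an \emph{AND vertex} $w_{i, (a, b, c)}$ for every cell $i$ and every triple in the domain of $\delta$. The edges would be $v_{i{-}1, a}, v_{i, b}, v_{i{+}1, c} \to w_{i, (a, b, c)}$ with threshold $3$ and $w_{i, (a, b, c)} \to v_{i, q}$ (for each $q \in \delta(a,b,c)$) with threshold $1$; the starting pebbling is $S = \{v_{i, q_i^0}\}_{i=1}^{k}$ and the target $T$ is any pebbling containing some $v_{i, q_{\mathrm{acc}}}$. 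Two pebble-game steps then implement one CA step: the threshold-$3$ gates fire exactly for the triples that are ``present'' in the current state phase, and the subsequent state phase materialises the $\delta$-successors. For the DAG versions I would time-unroll the construction, replacing each vertex by one copy per simulated step and routing edges between consecutive copies, so that the resulting graph is topologically sorted.

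The main obstacle lies in the nondeterministic case: the subset-choice freedom of $\PLang[pebbles]{tpg}$ could in principle allow the player to pebble more than one state vertex at a single cell, realising a ``virtual'' configuration that the cellular automaton itself cannot. My plan is to set the pebble cap equal to the number of cells, so that every reachable pebbling carries at most $k$ pebbles. If a state phase places a second pebble on cell $i$, then some other cell $j$ must be empty, and the threshold-$3$ AND gates incident to $j{-}1, j, j{+}1$ each lose a predecessor and fail to fire; the ``hole'' at $j$ therefore widens by one cell per simulated step. At the same time, any multi-pebbled region multiplies the number of AND gates that fire and pushes the AND-phase pebble count beyond the cap, forcing the player to prune back to singletons. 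Together these effects should force every trajectory that reaches $T$ to carry exactly one pebble per cell in each state phase, so that it matches a run of $C$; the time-unrolled variants inherit this argument, yielding completeness for $\Class D[f\operatorname{poly}, f\log]$ and $\Class N[f\operatorname{poly}, f\log]$.
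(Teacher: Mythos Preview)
Your approach matches the paper's: containment by direct simulation of the pebbling in $O(k\log n)$ space (plus the polynomial step bound in the DAG case), and hardness by reduction from the cellular-automata problems of Theorem~\ref{theorem-cellular} using one threshold-$1$ state vertex per (cell, state) pair and one threshold-$3$ AND vertex per (cell, triple), with time-unrolling into layers for the DAG variants. The two-phase alternation and the reduction gadget are exactly what the paper uses.

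There is, however, a real gap in your treatment of the nondeterministic case, and it stems from your choice of target. You write that $T$ is ``any pebbling containing some $v_{i,q_{\mathrm{acc}}}$,'' but the problem demands a \emph{single} target pebbling, and, more to the point, your correctness argument needs $T$ to be a \emph{full configuration} with one pebble per cell. The paper handles this by first modifying the automaton so that it has a unique accepting configuration and then letting $T$ be that configuration (on the final layer, in the DAG case). Without this, your hole-widening observation does not close the argument: if $T$ were, say, a singleton, the player could multi-pebble cell~$1$, leave cell~$k$ empty, and reach $T$ long before the hole spreads to cell~$1$.

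Your hole-widening observation itself is correct and is actually a stronger form of what the paper uses (the paper only needs that an empty cell stays empty). But the second half of your argument---that multi-pebbling ``pushes the AND-phase pebble count beyond the cap, forcing the player to prune back to singletons''---does not do the work you want. The cap does not force singletons; it only limits how many of the fireable AND vertices the player may select, and the player is free to select $k$ of them however she likes. The sound argument is: since $T$ pebbles one vertex per cell and an empty cell can never be repopulated, every intermediate state-phase pebbling must have \emph{at least} one pebble per cell; combined with the cap~$k$, this forces \emph{exactly} one per cell, and hence the sequence of state phases encodes a genuine run of~$C$. Once you fix $T$ to be a full configuration, this argument (which is essentially your first observation) suffices, and the ``cap overflow'' clause can be dropped.
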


\begin{proof}
  We first prove containment for all problems. Then we prove
  completeness first for the \textsc{dag} versions and then for the
  general version.
  
  \medskip\emph{Containment.} To see that
  $\PLang[pebbles]{tpg-max}\in\Class{XL}$ holds, observe that a
  deterministic Turing machine can store a pebbling in space $\kappa(x)
  O(\log n)$. Starting with $S$, the machine can compute the
  successive next steps and accepts when $T$ is reached. Similarly,
  $\PLang[pebbles]{tpg} \in \Class{XNL}$ since we can nondeterministically guess
  the correct subset of the next pebbling that has to be chosen. For
  the problems for \textsc{dag}s, observe that by the acyclicity the
  maximal distance of a pebble to any sink in the graph gets reduced
  by~$1$ in every step. Since the maximal distance at the start of the
  simulation is bounded by~$|V|$, we must reach $T$ after at most
  $|V|$ steps and, thus, the simulation can be done in both time
  $k\cdot |V|^{O(1)}$ and space $k\cdot O(\log|V|)$. 

  \begin{figure}[htpb]
    \begin{tikzpicture}[anchor=mid]
      \node [anchor=mid east] at (-1,4) {Layer $i$};
      \node [anchor=mid east] at (-1,0) {Layer $i+1$};
      \node [align=right,left] at (-1,2) {Auxiliary\\ layer};
      \graph [graph, no placement, nodes=node] {
        { [y=4]
          s11 / ${1,q_1}$ [x=0],
          s12 / ${1,q_2}$ [x=1],
          s21 / ${2,q_1}$ [x=3],
          s22 / ${2,q_2}$ [x=4],
          s31 / ${3,q_1}$ [x=6],
          s32 / ${3,q_2}$ [x=7]
        },
        { [y=0]
          t11 / ${1,q_1}$ [x=0],
          t12 / ${1,q_2}$ [x=1],
          t21 / ${2,q_1}$ [x=3],
          t22 / ${2,q_2}$ [x=4],
          t31 / ${3,q_1}$ [x=6],
          t32 / ${3,q_2}$ [x=7]
        },
        { [nodes={as=,minimum size=3pt,text height=,text depth=}, y=2]
          a1 [x=0.05], a2[x=0.35], a3[x=0.65], a4[x=0.95]
        },
        { [nodes={as=,minimum size=3pt,text height=,text depth=}, y=2]
          b1 [x=2.45], b2[x=2.75], b3[x=3.05], b4[x=3.35],
          b5 [x=3.65], b6[x=3.95], b7[x=4.25], b8[x=4.55],
        },
        { [nodes={as=,minimum size=3pt,text height=,text depth=}, y=2]
          c1 [x=6.05], c2[x=6.35], c3[x=6.65], c4[x=6.95]
        },
        { [right anchor=north]
          {s11,s21} -> a1,
          {s12,s21} -> a2,
          {s11,s22} -> a3,
          {s12,s22} -> a4,
        },
        { [right anchor=north]
          {s11,s21,s31} -> b1,
          {s12,s21,s31} -> b2,
          {s11,s22,s31} -> b3,
          {s12,s22,s31} -> b4,
          {s11,s21,s32} -> b5,
          {s12,s21,s32} -> b6,
          {s11,s22,s32} -> b7,
          {s12,s22,s32} -> b8,
        },
        { [right anchor=north]
          {s21,s31} -> c1,
          {s22,s31} -> c2,
          {s21,s32} -> c3,
          {s22,s32} -> c4,
        },
        { 
          {a1,a2} -> t11,
          {b1,b3} -> t21,
          {b2,b8} -> t22,
          {c1}    -> t31,
          {c2,c3} -> t32
        }        
      };
    \end{tikzpicture}
    \caption{Example of two layers of a threshold game constructed for
      $Q =  \{q_1,q_2\}$ and $s = 3$. The threshold of the large
      vertices is $1$, the threshold of the small vertices is~$2$ at
      the borders and~$3$ in the middle. Each pair of large vertices
      represents the two states a cell can be in. Note how each
      auxiliary vertex can be pebbled if, and only if, a certain pair
      or triple of states is reached by the cells in layer~$i$. The
      two arrows leading into the vertex $(1,q_1)$ at the bottom mean
      that the first cell can switch to state $q_1$ if it was in any
      state before, but its neighboring cell was in state~$q_1$. The
      fact that there is no edge to $(1,q_2)$ means that the first cell will
      not switch to state~$q_2$, regardless of its earlier state and
      that of its neighbor. 
    }
    \label{fig-tpg}
  \end{figure}
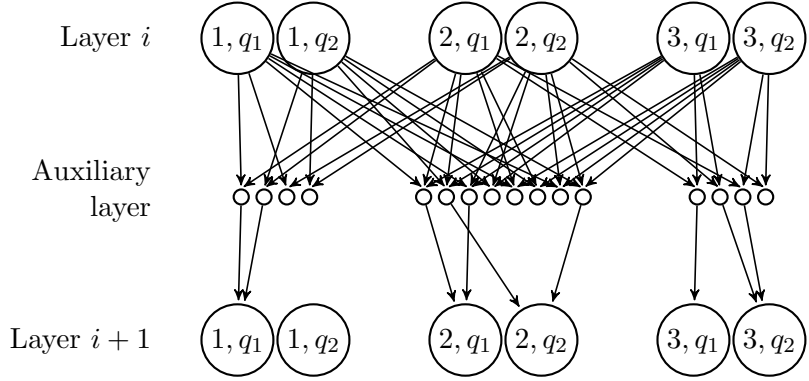
  
  \medskip\emph{Completeness of the maximum \textsc{dag}-version.} We
  reduce $\PLang[cells]{dag-dca}$ to the problem $\PLang[pebbles]{dag-tpg-max}$,
  which proves the claim by Theorem~\ref{theorem-cellular}. Let
  $(C,s)$ be given as input for the reduction and let $Q$ be
  $C$'s state set. The reduction outputs a pebble graph that encodes
  the computation of $s$ many copies of $C$ in the following way: It
  consists of $t = |Q|$ layers, each encoding a configuration during
  the computation. Each layer consists of $s$ blocks of $|Q|$ vertices
  and placing one pebble in each block clearly encodes exactly one
  state string. 

  To connect two layers $L$ and $L'$, we first insert an
  auxiliary layer between them with $s \cdot |Q|^3$ vertices, namely
  one vertex for each cell and each triple of a state of the preceding
  cell, in the own cell, and in the next cell.
  The threshold of the auxiliary vertex is $3$.  (Actually, for the
  first and last cell, only $|Q|^2$ auxiliary vertices are needed and
  the threshold is $2$. However, to simplify the presentation, we
  ignore these special cells in the following).  Note that in a game step,  
  an auxiliary vertex can be pebbled if, and only if, the previous,
  current, and next cell were in specific states. Also note that
  there will be exactly $s$ vertices on each auxiliary level that can
  be pebbled in such a step if the level before it corresponded to a
  configuration. 

  We now connect the vertices of the auxiliary layer to~$L'$. Let
  $(c,q)$ be a vertex of $L'$, corresponding to a cell position $c$
  and a state $q$. Its predecessors will be all auxiliary vertices
  $(c,q_1,q_2,q_3)$ of the preceding layer such that $C$'s transition
  function maps $(q_1,q_2,q_3)$ to $q$. The threshold of all layer
  vertices is~$1$. Figure~\ref{fig-tpg} depicts an
  example of this construction. Since the auxiliary vertices can be
  pebbled exactly 
  if the automaton reaches the three states $(q_1,q_2,q_3)$, exactly
  those layer vertices can be pebbled that correspond to the next
  configuration of the cellular automaton. 
  
  To conclude the description of the reduction, let $S$ be the
  pebbling placing vertices on the first layer corresponding to the
  initial state string (which is part of the input) and let $T$ be the
  pebbling placing vertices on the last layer corresponding to the
  (only) accepting configuration (this can be achieved by an
  appropriate modification of $C$, if necessary).

  By construction, the deterministic game played on the graph starting
  with $S$ will end in $T$ if, and only if, the $s$-cell version of~$C$
  accepts the state sequence after $t$ steps. This shows that the
  reduction is correct. Clearly, the reduction is a para-$\Class
  L$-reduction. 
  
  \medskip\emph{Completeness of the \textsc{dag}-version.}
  For the completeness  for $\Class N[f
  \mathrm{poly},f \log]$ of the problem
  $\PLang[pebbles]{dag-tpg}$ we use the same
  reduction as above, but start from
  $\PLang[cells]{dag-nca}$. It is not immediately obvious that this
  reduction is correct since in a non-maximal pebble game the
  nondeterminism of the game allows us to ``forget'' pebbles and,
  possibly, this could ``make room'' for ``illegal'' pebble to appear
  that could disrupt the whole simulation.

  To see that this does not happen, let us introduce the following
  notion: For a cell number~$c$ let us say that a vertex
  \emph{belongs} to $c$ if it is either a vertex on the main layers of
  the form $(c,q)$ or a vertex on an auxiliary layer of the form
  $(c,q_1,q_2,q_3)$. Clearly, every vertex belongs to exactly one cell
  and in the deterministic case in each step for each cell exactly one
  vertex belonging to this cell is pebbled.

  The crucial observation is that if in a non-maximal step we do not
  pebble any vertex of a cell~$c$, we cannot pebble any vertex of
  cell~$c$ in any later step. This is due to the way the edges are set
  up: In order to pebble a vertex belonging to a cell~$c$ it is always
  a prerequisite that at least one vertex belonging to $c$ was pebbled
  in the previous step.

  Now, in the target pebbling $T$ for each cell one vertex belonging
  to this cell is pebbled. By the above observation, in order to reach
  $T$ this must have been the case in all intermediate steps. Because
  of the upper bound of $s$ on the number of pebbles, we know that on
  each layer \emph{exactly} $s$ vertices are pebbled and, thus,
  exactly one vertex belonging to each cell is pebbled. This proves that,
  indeed, on each main layer the pebbled vertices correspond exactly
  to possible cell contents of a computation of the automaton.

  \medskip\emph{Completeness of the maximal general version.} 
  Let us now prove that the problem $\PLang[pebbles]{tpg-max}$ is
  complete for $\Class{XL}$ by reducing from $\PLang[cells]{dca}$. We
  basically proceed as for the \textsc{dag} version, but no longer need
  to produce an acyclic graph and no longer have a time bound in the
  input. The idea is to let the computation continue ``as long as
  necessary'': Instead of constructing a graph consisting of
  many identical  
  main layers that alternate with identical auxiliary layers, we put
  only \emph{one} main layer in the graph and only \emph{one}
  auxiliary layer. The predecessors of the auxiliary layer's vertices
  are, as before, the vertices of the main layer. However, the
  successors of the auxiliary layer's vertices are no longer the
  vertices on the next main layer, but the corresponding vertices on
  the first (and only) main layer.

  With this construction, the pebbling will alternate between the main
  layer and the auxiliary layer and, after every two game steps, the
  main layer will encode the next configuration of the input
  machine~$M$. By having $S$ encode the start  
  configuration and $T$ encode the only accepting configuration (on
  the main layer), we get a reduction.

  \medskip\emph{Completeness of the general version.}
  For the general version, as for the \textsc{dag} version, one can
  argue that the same reduction as for the maximal case works for the
  general case.
\end{proof}

Another natural parametrization is by the number of steps needed
rather than by the number of pebbles. It is easily 
seen that $\PLang[steps]{tpg} \in \Para\Class{NP}$ and
$\PLang[steps]{tpg-max} \in \Para\Class P$. Furthermore,
$\PLang[steps]{tpg-max} \in \Class{XL}$ holds also since we can
compute one next step in the game in space $O(\log n)$ and by the standard
trick of chaining together two logspace computations, we can compute
$k$ steps in space $O(k \log n)$. Interestingly, the
argument can neither be used to show that 
$\PLang[steps]{tpg}$ lies in $\Class{XNL}$ nor that
$\PLang[steps]{tpg-max}$ lies in $\Class
D[f\operatorname{poly},f\log]$. We were not able to 
prove completeness of either problem for a parameterized class.

One can also consider a ``power parameterization'' similar to that of
$\PLang{power-ntsc}$: in $\PLang[step\ power]{tpg}$ we are 
given a parameter $k$ along with a threshold pebble game and ask
whether $T$ can be reached from $S$ in $n^k$ steps, where $n$ is the
order of the graph. As for the generic Turing machine problem, the
power parametrization results in problems that are complete for
$\Class N[n^f,f \mathrm{poly}]$ (for $\PLang[step\ power]{tpg}$) and
for $\Class D[n^f,f \mathrm{poly}]$ (for $\PLang[step\
power]{tpg-max}$). The proof is essentially the same as in the above
theorem, only the machine can now use much more space ($f_x\cdot n^c$
for some constant~$c$ instead of $f_x\cdot O(\log n)$), but we can
also use more pebbles (up to $n$ many instead of just $k$).

\paragraph{Longest Common Subsequence.}

\label{section-lcs}

The input for the longest common subsequence problem $\Lang{lcs}$ is a
set $S$ of strings over some alphabet $\Sigma$ together with a
number~$l$. The question is whether there is a string $c \in \Sigma^l$
that is a \emph{subsequence} of all strings in $S$, meaning that for
all $s \in S$ just by removing symbols from $s$ we arrive at~$c$.

There are several natural parameterization of $\Lang{lcs}$: We can
parameterize by the number of strings in~$S$, by the size of
the alphabet, by the length~$l$, or any combination thereof. Guillemot 
has shown \cite{Guillemot2011} that $\PLang[strings,length]{lcs}$ is
fpt-complete for $\Class W[1]$, while $\PLang[strings]{lcs}$ is
fpt-equivalent to $\PLang[\mathit s]{ntsc}$. Hence, by
Theorem~\ref{theorem-dtsc}, both problems are complete  
under fpt-reductions for the fpt-reduction closure of $\Class N[f 
\operatorname{poly}, f\log]$. We tighten this in
Theorem~\ref{theorem-lcs} below (using a weaker reduction is more than
a technicality: $\Class N[f \operatorname{poly}, f\log]$ is presumably
not even closed under fpt-reduction, while it \emph{is} closed under
para-$\Class L$-reductions).

As a preparation for the proof of  Theorem~\ref{theorem-lcs}, we first
present a simpler-to-prove result: Let $\Lang{lcs-injective}$ denote the
restriction of $\Lang{lcs}$ where all input words must be
\emph{$p$-sequences} \cite{FellowsHS2003}, which are words containing
any symbol at most once (the function mapping word indices to word
symbols is injective).

\begin{theorem}\label{theorem-lcs-injective}
  $\Lang{lcs-injective}$ is $\Class{NL}$-complete and this holds
  already under the restriction $|S| \le 4$.
\end{theorem}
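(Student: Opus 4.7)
The plan is to prove both containment in $\Class{NL}$ and $\Class{NL}$-hardness of $\Lang{lcs-injective}$, the latter even for instances with $|S|\le 4$. The overarching observation used in both directions is that, since each input string is a $p$-sequence, any common subsequence is determined uniquely by its \emph{set} of symbols: within each string the order of any subset is already fixed. The problem therefore reduces to finding a chain of length $l$ in the partial order $\prec$ on the common alphabet defined by $a\prec b$ iff $a$ precedes $b$ in every string that contains both.

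For containment, I would show that chain-finding in $\prec$ sits in $\Class{NL}$ by guessing the chain symbol by symbol. The machine stores only the currently guessed symbol $\sigma$ together with a counter, nondeterministically guesses a next symbol $\sigma'$, and verifies $\sigma\prec\sigma'$ by scanning the strings in $S$ and comparing the positions of $\sigma$ and $\sigma'$; both operations use only logarithmic space. The machine accepts once the counter reaches $l$.

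For hardness with $|S|\le 4$, I would reduce from $\Lang{reach}$ on directed graphs. After the standard preprocessing that turns the input into a layered \textsc{dag} with $n$ identical layers $L_0,\dots,L_{n-1}$ (padding $t$ with self-loops so that an $s$-$t$ path exists iff a length-$(n-1)$ such path exists), the reduction introduces a symbol $v^{(i)}$ for every vertex $v$ and every layer index $i$, together with a small number of auxiliary separator symbols. The first $p$-sequence $S_1$ lists all symbols in layer order, forcing any common subsequence of length $n$ to pick exactly one vertex symbol per layer. The remaining three $p$-sequences $S_2,S_3,S_4$ encode the edge relation so that a pair $u^{(i)},v^{(i+1)}$ survives the intersection of all four total orders iff $(u,v)$ is an edge of $G$. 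Setting the target length to $n$ then makes common subsequences of that length correspond exactly to $s$-$t$ paths.

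The main obstacle will be the combinatorial design of $S_2, S_3, S_4$: each bipartite adjacency $E_i\subseteq L_{i-1}\times L_i$ must be representable as the intersection of three linear orders that reject every non-edge while preserving every edge, and the gadgets for different layers must be prevented from interacting in unintended ways. I expect this to be the technically demanding core of the proof, and I would approach it by arguing that the incidence poset of each bipartite adjacency admits a fixed-dimension realization once sufficiently many auxiliary separator symbols are interspersed to confine each layer's gadget. Once the four $p$-sequences are constructed, correctness follows directly from the chain characterization used in the upper-bound direction, and it is routine to check that the four constructed strings are indeed $p$-sequences and that the reduction is computable in logspace.
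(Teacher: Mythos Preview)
Your containment argument is correct and matches the paper's: guess the common subsequence symbol by symbol, storing only the current symbol and a counter, and check precedence in each string in logspace.

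Your hardness plan, however, has a genuine gap. You propose one symbol $v^{(i)}$ per vertex and layer, and you want $u^{(i)}$ to be followable by $v^{(i+1)}$ in a common subsequence precisely when $(u,v)$ is an edge. Since ``followable in all four strings'' is exactly the intersection of four total orders, what you are asking is that the height-two poset on $L_{i-1}\cup L_i$ with cover relations given by~$E_i$ have order dimension at most~$4$. This is false in general: the standard example $S_n$ (the complement of a perfect matching, $a_j<b_{j'}$ iff $j\ne j'$) is a height-two bipartite poset of dimension exactly~$n$, and nothing prevents such a configuration from appearing as a layer of your layered \textsc{dag}. Separator symbols between layers only isolate the gadgets from each other; they cannot lower the dimension \emph{within} a single bipartite layer, since order dimension is monotone under taking induced subposets. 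So the ``combinatorial design of $S_2,S_3,S_4$'' that you flag as the core step is not merely technical---it is impossible as stated.

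The paper sidesteps this obstruction by using one symbol per \emph{edge} rather than per vertex. The sought subsequence is then the sequence of edges along an $s$--$t$ path. Two strings with opposite vertex orderings on the odd layers force at most one edge per odd layer (and similarly a second pair for even layers); placing the symbols of edges incident to a vertex~$v$ contiguously, with incoming edges before outgoing edges, forces any two consecutive chosen edges to share an endpoint. With edge symbols the task is no longer ``realize an arbitrary bipartite relation in bounded dimension'' but only ``pick one edge per layer so that consecutive edges meet'', which four $p$-sequences can encode. If you want to salvage your vertex-based plan, you would effectively need to introduce auxiliary symbols that play the role of edges---at which point you have reinvented the paper's construction.
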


\begin{proof}
  The problem $\Lang{lcs-injective}$ lies in $\Class{NL}$ via the
  following algorithm: We guess the common subsequence $c$
  nondeterministically and use a counter to ensure that $c$ has length
  at least~$l$. The problem is that we cannot remember more than a
  fixed number of letters of~$c$ without running out of
  space. Fortunately, we do not need to: We always only keep track of
  the last two guessed symbols. For each such pair $(a,b)$, we check
  whether $a$ appears before $b$ in all strings in $S$. If so, we move
  on to the next pair, and so on. Clearly, this algorithm needs only
  logarithmic space and correctly decides $\Lang{lcs-injective}$.

  To prove hardness for $|S| = 4$, we reduce from the
  $\Class{NL}$-complete language $\Lang{layered-reach}$, where the
  input is a layered graph $G$ (each vertex is assigned a layer number
  and   all edges only go from one layer to the next), the source
  vertex $s$ is the (only) vertex on layer~$1$ and the target $t$ is
  the (only) vertex on the last layer~$m$. The question is whether
  there is a path from $s$ to~$t$.

  For the reduction to $\Lang{lcs-injective}$ we introduce a symbol
  for each edge of~$G$. The common subsequence will then be exactly
  the sequence of edges along a path from $s$ to~$t$.   
  We consider the layers $L_1$, $L_2$, \dots, $L_m$ in order and, for
  each of them, append edge symbols to the four strings as described
  in the following.

  Consider a layer $L_i$, containing vertices $\{v_1, \dots, v_n\}$.
  Assume $i$ is odd. We go over the vertices $v_1$ to $v_n$ in that
  order. For $v_1$, first consider all edges that end at
  $v_1$. They must come from layer $i-1$. We add these edges in some
  order to the first string (for instance, in the order of the index
  of the start vertex of these edges). Still considering $v_1$, we
  then consider all outgoing edges and append them in some fixed
  order. Then we move on to $v_2$ and add edge symbols in the same way
  for it, and so on. If $i$ is even rather than odd, we add the same
  edge symbols to the third rather than to the first string.

  For the second (or, for even $i$, the fourth string), we go over the
  vertices in decreasing order. We start with $v_n$. We consider the
  incoming edges for $v_n$ and add them to the second string, but in 
  reverse order compared to the order we used for the first
  string. Next, we append the outgoing edges, again in reverse
  order. Then we consider $v_{n-1}$ and proceed in the same way.

  As an example, consider the following layered graph:
  \medskip
  
  \begin{tikzpicture}
    \graph[graph, no placement, typeset=$v_\tikzgraphnodetext$, nodes=node] {
      { [x=0] 1[y=1], 2[y=2], 3[y=3]},
      { [x=2] 4[y=1], 5[y=2], 6[y=3]},
      { [x=4] 7[y=1], 8[y=2], 9[y=3]};

      1 ->["$a$"] 4 -> ["$f$"] 7;
      2 ->["$b$"] 4 -> ["$g$"] 8;
      2 ->["$c$"] 5 -> ["$h$" near start] 9;
      2 ->["$d$"] 6 -> ["$i$" near start] 8;
      3 ->["$e$"] 6 -> ["$j$"] 9;
    };
  \end{tikzpicture}
  \medskip
  
  This would result in the following strings, where spaces have been
  added for clarity and also the symbols $v_i$, which are not part of
  the strings (so the second string is actually $edcbajhigf$):
  \def\v#1{\textcolor{black!50}{v_#1}}
  \begin{align*}
    &\v1 a\quad \v2 bcd\quad \v3 e\qquad f\v7\quad gi\v8\quad hj\v9\\
    &\v3 e\quad \v2 dcb\quad \v1 a\qquad jh\v9\quad ig\v8\quad f\v7\\
    &ab \v4 fg\quad c \v5 h\quad de \v6 ij\\
    &ed \v6 ji\quad c \v5 h\quad ba \v4 gf  
  \end{align*}
  
  We make two crucial observations. First, if an edge is included in the
  common subsequence, no other edge starting at the same layer can be
  included also: The edge symbols of one layer come in one order in
  the first (or third) string and in the reverse order in the second
  (or fourth) string. Thus, there cannot be two of them in the common
  subsequence. For the same reason, there can only be one edge
  arriving at a layer in the common subsequence. The second crucial
  observation is that if the sequence contains an edge $e$ arriving at a
  vertex $v$, it can only contain edges leaving from vertex $v$, if it
  contains any edge leaving from $v$'s layer: Only the edge  $e'$ leaving
  $v$ will come after $e$ in both the first and second (or third and
  fourth) string.

  Putting it all together, we get the following: There is a path from
  $s$ to $t$ in $G$ if, and only if, there is a common subsequence of
  length $m-1$ in the constructed strings: If there is a path, the
  sequence of the edges on it form a subsequence; and if there is such
  a subsequence, because of its length, it must contain exactly one
  edge leaving from each layer except the last -- and these edges must
  form a path as we just argued.
\end{proof}

Although we do not prove this, we remark that
$\Class{NL}$-completeness already holds for $|S| = 3$, while for
$|S| = 2$ the complexity appears to drop significantly.

\begin{corollary}\label{corollary-lcs-injective}
  $\PLang[strings]{lcs-injective}$ is para-L-complete for
  $\Para\Class{NL}$. 
\end{corollary}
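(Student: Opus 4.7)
The corollary is essentially a direct repackaging of Theorem~\ref{theorem-lcs-injective}, so my plan is to verify containment and hardness separately, relying on that theorem for the technical work.

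For containment, I would observe that $\Lang{lcs-injective} \in \Class{NL}$ by Theorem~\ref{theorem-lcs-injective}, and that any language in $\Class{NL}$ is automatically in $\Para\Class{NL}$ under any parameterization: in the definition of $\Para\Class{NL}$, take $\pi$ to be the empty string and $A = \Lang{lcs-injective}$. Thus $\PLang[strings]{lcs-injective} \in \Para\Class{NL}$ regardless of how we parameterize.

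For hardness, I would let $(Q,\kappa)$ be an arbitrary problem in $\Para\Class{NL}$. By the definition of the para-operator, there exist a language $A \in \Class{NL}$, an alphabet $\Pi$, and a computable $\pi\colon\mathbb N\to\Pi^*$ with $x \in Q \iff (x,\pi(\kappa(x))) \in A$. Theorem~\ref{theorem-lcs-injective} tells us that $\Lang{lcs-injective}$ restricted to instances with $|S| \le 4$ is $\Class{NL}$-complete under ordinary logspace reductions, so in particular there is a logspace reduction $r$ from $A$ to $\Lang{lcs-injective}$ that always outputs an instance with at most four strings. The para-$\Class L$-reduction from $(Q,\kappa)$ to $\PLang[strings]{lcs-injective}$ is then the composition: on input $x$, first compute $\pi(\kappa(x))$ (possible in space $f(\kappa(x))$ since $\pi$ is computable and depends only on the parameter), then apply $r$ to $(x,\pi(\kappa(x)))$ using logarithmic space in the combined input length. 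The output instance has at most four strings, so the new parameter is at most $4$, which is trivially bounded by a computable function of $\kappa(x)$; correctness and space bounds are inherited from $r$ and $\pi$.

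There is no real obstacle here: the heavy lifting was done in Theorem~\ref{theorem-lcs-injective}, where the number of strings was kept constant during the reduction from $\Lang{layered-reach}$. The only subtlety worth flagging is that hardness under para-$\Class L$-reductions requires the output parameter to be bounded by a computable function of the input parameter, and here we actually get the much stronger property that it is bounded by the \emph{constant}~$4$ — which is exactly what makes Theorem~\ref{theorem-lcs-injective}'s ``$|S| \le 4$'' clause crucial rather than cosmetic.
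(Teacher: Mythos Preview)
Your proposal is correct and follows essentially the same approach as the paper: containment from $\Lang{lcs-injective}\in\Class{NL}$, and hardness from the fact that the slice with four strings is already $\Class{NL}$-hard. The paper simply invokes as a ``well-known fact'' the step you spell out explicitly (that hardness of a single slice for~$C$ yields $\Para C$-hardness of the parameterized problem), but the underlying argument is the same.
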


\begin{proof}
  The problem lies in $\Para\Class{NL}$ since by
  Theorem~\ref{theorem-lcs-injective} we can solve any instance in
  $\Class{NL}$ without even using the parameter. On the other hand,
  the theorem also shows that a slice of the parameterized problem
  (namely for 4 strings) is already hard for $\Class{NL}$. It is a
  well-known fact that in this case the parameterized problem is hard
  for the corresponding para-class, which happens to be
  $\Para\Class{NL}$. 
\end{proof}

\begin{theorem}\label{theorem-lcs}
  $\PLang[strings]{lcs}$ is para-L-complete for $\Class
  N[f\mathrm{poly}, f\log]$. 
\end{theorem}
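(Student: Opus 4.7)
The argument splits into containment and hardness.

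\emph{Containment.} I would exhibit a nondeterministic algorithm for $\PLang[strings]{lcs}$ running simultaneously in $f(k)\mathrm{poly}(n)$ time and $f(k)\cdot O(\log n)$ space. On input $(s_1,\dots,s_k,\ell)$ the machine maintains $k$ pointers $p_1,\dots,p_k$, one into each input string, initially set to $0$. In each of $\ell$ rounds it nondeterministically guesses the next symbol $\sigma$ of a candidate common subsequence and then, for every $i$, scans $s_i$ forward from $p_i$ and resets $p_i$ to the first position where $\sigma$ appears, rejecting if there is no such position. After $\ell$ rounds it accepts. The pointers and the round counter together occupy $O(k\log n)$ space; the running time is $O(\ell\cdot k\cdot n)=O(k n^2)$, which is polynomial in $n$ with parameter factor $f(k)=k$. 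Hence $\PLang[strings]{lcs}\in\Class N[f\mathrm{poly},f\log]$.

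\emph{Hardness.} By Theorem~\ref{theorem-dtsc} it suffices to give a para-$\Class L$-reduction from $\PLang[\mathit s]{ntsc}$ to $\PLang[strings]{lcs}$. Given $(M,1^t,1^s)$ with parameter $s$, I would output $k=g(s)$ strings together with a target length $\ell$ so that a common subsequence of length $\ell$ exists if and only if $M$ has an accepting computation in time $t$ using space $s$. The construction is cell-by-cell: for each tape cell $i\in\{1,\dots,s\}$ a constant number of strings is assigned whose symbols describe the admissible local history of cell $i$ over time -- its successive tape contents together with markers for head arrivals and departures and, when the head is present, the current state of $M$. Interface symbols shared between the strings for cells $i$ and $i+1$ enforce that the two neighboring cells agree on when and in what state the head crosses their boundary. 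Temporal consistency within a single cell is enforced by the back-and-forth ordering trick already used in Theorem~\ref{theorem-lcs-injective}, which guarantees that a common subsequence must pick exactly one local history per $(i,j)$-slot and that these picks spell out a valid accepting computation.

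The main obstacle is the combinatorial design of the interface alphabet: it must rule out any "cheating" common subsequence of length $\ell$ that is assembled from locally admissible but globally incompatible histories. I would handle this by generalizing the layered-reachability construction of Theorem~\ref{theorem-lcs-injective} to a two-dimensional grid of $(i,j)$-blocks, writing the cell-$i$ boundary symbols in one order and the cell-$(i{+}1)$ boundary symbols in the reverse order so that at most one head-crossing event per step can survive into the common subsequence, and analogously for the time direction. The para-$\Class L$ space bound for the reduction itself is not a serious difficulty: each output symbol depends only on $M$'s transition table and on three small indices (a cell, a time step, and a position within a local-history block), all of which fit in $O(\log n)$ bits, so the entire output can be streamed using $f(s)+O(\log n)$ working space. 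Putting the two parts together with Theorem~\ref{theorem-dtsc} yields para-$\Class L$-completeness of $\PLang[strings]{lcs}$ for $\Class N[f\mathrm{poly},f\log]$.
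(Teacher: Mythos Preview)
Your containment argument is exactly the paper's. For hardness, your plan is close in spirit to the paper's: a constant number of strings per cell, the opposite-ordering trick of Theorem~\ref{theorem-lcs-injective} inside each group so that at most one ``event'' per slot survives, and shared symbols between neighboring groups to synchronize adjacent cells. The paper reduces from $\PLang[cells]{dag-nca}$ (Theorem~\ref{theorem-cellular}) rather than from $\PLang[\mathit s]{ntsc}$, and first passes through an intermediate \emph{sequential} cellular automaton so that only one cell updates per minor step; your choice of a single-tape machine sidesteps that conversion, which is a reasonable alternative starting point.

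There is, however, a concrete gap that your outline does not address. A common subsequence must be a subsequence of \emph{all} $g(s)$ strings simultaneously, not just of the strings for cells $i-1,i,i+1$. If the symbols describing cell~$i$'s local history occur only in the strings for its neighbors, then no such symbol can appear in a common subsequence once $s\ge 4$, and the intended length-$\ell$ witness collapses. The paper handles this with a crucial ``padding'' rule (its step~7): after all meaningful insertions, for every string $s_j^i$ one inserts \emph{every} symbol not already present in $s_j^i$ once after \emph{every} letter of $s_j^i$. This renders those symbols unconstrained by $s_j^i$ while the strings that should constrain them still do. Without an analogue of this step your construction produces no nontrivial common subsequence at all; with it, you must then redo the ``no cheating'' direction, because padding creates many fresh occurrences of every symbol and one has to argue that these cannot be abused. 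In the paper this second direction is the most delicate part of the argument (an induction over major and minor steps showing that the indices in any length-$t\cdot k$ common subsequence are forced to increase strictly and hence encode a genuine run), and your proposal does not yet contain the ingredients to carry that through.
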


\begin{proof}
  Clearly, $\PLang[strings]{lcs} \in \Class N[f\mathrm{poly},
  f\log]$ since a nondeterministic machine can guess the common
  subsequence on the fly and only needs to keep track of $k$ pointers
  into the strings, which can be done in space $O(k \log_2 n)$.
  To prove hardness, we reduce from the $\Class N[f\mathrm{poly},
  f\log]$-complete problem $\PLang[cells]{dag-nca}$, the
  acceptance problem for nondeterministic cellular
  \textsc{dag}-automata, see Theorem~\ref{theorem-cellular}.
  
  Our first step is to tackle the problem that in an \textsc{lcs} instance we
  choose ``one symbol after the other'' whereas in a cellular
  automaton all cells make one step in parallel. To address this, we
  introduce a new intermediate problem $\PLang[cells]{dag-nca-sequential}$
  where the model of computation of the cellular automaton is modified
  as follows: Instead of all $k$ cells making one parallel step,
  initially only the first cell makes a transition, then the second
  cell makes a transition (seeing already the new state reached at the
  first cell, but still the initial state of the third cell), then the
  third cell (seeing the new state of cell two and the old of cell
  four), and so on up to the $k$th cell. Then, we begin again with the
  first cell, followed by the second cell, and so on.
  
  \begin{claim}
    $\PLang[cells]{dag-nca}$ reduces to  $\PLang[cells]{dag-nca-sequential}$.
  \end{claim}

  \begin{proof}[of the claim]
    The trick is to have cells
    ``remember'' the states they were in: On input of $(C,\penalty0q_1\dots q_k)$, we
    construct a ``sequential'' cellular automaton $C'$ as follows. If
    $Q$ is the state set of $C$, the state set of $C'$ is $Q \times Q$.
    Each state $q \in Q'$ is now a pair $(q^{\mathrm{previous}},
    q^{\mathrm{current}})$. The transition relation is adjusted as follows:
    If there used to be a transition
    $(q_{\mathrm{left}},q_{\mathrm{old}},q_{\mathrm{right}},q_{\mathrm{new}})
    \in Q^4$, meaning 
    that a cell of the parallel automaton~$C$ can switch to state $q_{\mathrm{new}}$ if
    it is in state $q_{\mathrm{old}}$, its left neighbor is in state
    $q_{\mathrm{left}}$, and its right neighbor is in state
    $q_{\mathrm{right}}$, the we now have the following 
    transitions in $C'$:
    $((q_{\mathrm{left}},x),(y,q_{\mathrm{old}}),(z,q_{\mathrm{right}}),(q_{\mathrm{old}},q_{\mathrm{new}}))$
      where 
    $x,y,z \in Q$ are arbitrary. Indeed, this transition will switch a
    cells state based on the \emph{previous} state of the cell before it
    and on the \emph{current} state of the cell following it and will
    store that previous state. For the first and last cells, this
    construction is adapted in the obvious manner. Clearly, the
    resulting sequential automaton will arrive in a sequence
    $(x_1,q_1)\dots(x_k,q_k)$ of states  for some $x_i \in Q$ after
    $t\cdot k$ steps if, and only if, the original automaton arrives in
    states $q_1\dots q_k$ after $t$~steps. This proves the reduction.
  \end{proof}

  \medskip\emph{The basic idea.}
  We now show how  $\PLang[cells]{dag-nca-sequential}$ can be reduced
  to $\PLang[strings]{lcs}$. Before we plunge into the details, let us
  first outline the basic idea: Each cell of a cellular
  \textsc{dag}-automaton ``behaves somewhat like a reachability
  problem'' meaning that we must find out whether the 
  automaton will arrive in the accepting state starting from the
  initial state. Thus, as in the proof of
  Theorem~\ref{theorem-lcs-injective}, we use four strings to
  represent a cell of the automaton, giving a total of $4k$ strings,
  where $k$ is the number of cells. However, the cells do not act
  independently; rather each step of a cell depends on the states of
  the two neighboring cells. Fortunately, this ``control'' effect can
  be modelled by adding an ``edge's'' symbol (actually, a transition's
  symbol) not only to the four strings of the cell, but also to the
  four strings of predecessor and successor cells at the right
  position (namely ``before the required state symbol''). 
  In the following, we explain the idea just outlined in detail.
  
  Let $(C,q_1\dots q_k)$ be given as input for the reduction. Since
  $C$ is sequential and also a \textsc{dag}-automaton, its steps can be
  grouped into at most $t$ many groups (``major steps'') of $k$
  sequential steps (``minor steps'') taken by cells $1$ to $k$ in that
  order, where $t$ depends linearly on the size of~$C$. By modifying
  $C$, if necessary, we may assume that $C$ makes exactly $t\cdot k$
  sequential steps when it accepts the input and, otherwise, makes
  strictly less steps. We use $s$ to denote a major step number.
  
  \medskip\emph{Construction of the strings.}
  We map $(C,q_1\dots q_k)$  to $4k$ strings $s_1^1$, $s_2^1$,
  $s_3^1$, $s_4^1$, \dots, $s_1^k$, $s_2^k$, $s_3^k$, $s_4^k$ and ask
  whether they have a common subsequence of length $t\cdot k$. Each
  group of four strings is setup similarly to the four strings from the 
  proof of Theorem~\ref{theorem-lcs-injective}: $s_1^i$ and $s_2^i$
  model the states (vertices) the $i$th cell has just before odd
  major steps~$s$; and  $s_3^i$ and $s_4^i$ model the states the cell
  has before even major steps~$s$.

  Consider cell $i$ and its four strings $s_1^i$ to $s_4^i$. Recall
  that in  Theorem~\ref{theorem-lcs-injective} we conceptually added
  the vertices of the first layer in opposite orders to $s_1^i$
  and~$s_2^i$, although in reality these vertices were not part of the 
  final strings and were added to make it easier to explain where the
  actual symbols (the edges) were placed in the strings. In our
  setting, the role of the vertices on the first layer is taken by the
  states $Q = \{q_1,\dots,q_n\}$ of the automaton $C$ tagged by the
  major step number~$1$. Thus, $s_1^i$ starts (conceptually) with
  $(q_1,1) \dots (q_n,1)$ and $s_2^i$ starts with $(q_n,1) \dots
  (q_1,1)$. Next come tagged versions of the states just before the
  third major step, so $s_1^i$ continues $(q_1,3) \dots (q_n,3)$ and
  $s_2^i$ with $(q_n,3) \dots (q_1,3)$. We continue in this way for
  all odd major steps. For even major steps, we add analogous strings
  to $s_3^i$ and $s_4^i$.

  Continuing the idea from Theorem~\ref{theorem-lcs-injective}, we now
  add ``edges'' to the strings. However, instead of an edge from one
  vertex so another, the transition relation of a cellular automaton
  contain 4-tuples $f =
  (f_{\mathrm{left}},f_{\mathrm{old}},f_{\mathrm{right}},f_{\mathrm{new}})
  \in Q^4$ of states, which allows a cell to switch to state
  $f_{\mathrm{new}}$ when it was in state $f_{\mathrm{old}}$ and its
  left neighbor was in state $f_{\mathrm{left}}$ and the right
  neighbor was in state~$f_{\mathrm{right}}$. Recall that in
  Theorem~\ref{theorem-lcs-injective}, for each $e$ from some vertex
  $a$ on an odd layer to a vertex $b$, we 
  added the symbol $e$ \emph{after} $a$ in the first two strings and
  \emph{before} $b$ in the last two strings. In a similar way, for the
  cellular automaton for each 4-tuple $f$ we add new ``symbols''
  $(f,s,i)$ consisting of a transition, a major step number, and a
  cell index~$i$ to the strings. This symbol is added at several
  places to the strings (we assume that $s$ is odd; for even $s$
  exchange the roles of the first two and the last two strings
  everywhere); sometimes even more than once. The rules are as follows:
  \begin{enumerate}
  \item Iterate over all $(f,s,i)$ in some order and insert $(f,s,i)$
    directly after $(f_{\mathrm{old}},s)$ in $s_1^i$. 
  \item Next, again iterate over all $(f,s,i)$, but now in reverse
    order, and insert $(f,s,i)$ after $(f_{\mathrm{old}},s)$ in $s_2^i$.
  \end{enumerate}
  Note that using the two opposite orderings, as in
  Theorem~\ref{theorem-lcs-injective}, for each $(f_{\mathrm{old}},s)$
  at most one $(f,s,i)$ can be part of a common subsequence.
  \begin{enumerate}\setcounter{enumi}{2}
  \item Next, iterate over all $(f,s,i)$ in some order and insert $(f,s,i)$
    directly before $(f_{\mathrm{new}},s+1)$ in $s_3^i$. 
  \item Next, iterate over all $(f,s,i)$ in reverse order and insert $(f,s,i)$
    directly before $(f_{\mathrm{new}},s+1)$ in $s_4^i$. 
  \end{enumerate}
  The effect of the above is to make the automaton  switch to
  $f_{\mathrm{new}}$ in cell~$i$ after major step~$s$. Now, we still
  need to ensure that 
  this switch is only possible when the preceding cell has already
  switched to state $f_{\mathrm{left}}$ after step~$s$ and the next
  cell is in state $f_{\mathrm{right}}$ before step~$s$. 
  \begin{enumerate}\setcounter{enumi}{4}
  \item Next, iterate over all $(f,s,i)$  and insert $(f,s,i)$
    directly after $(f_{\mathrm{left}},s+1)$ in $s_3^{i-1}$ and $s_4^{i-1}$. For
    $i=1$, no symbols are added.
  \item Next, iterate over all $(f,s,i)$  and insert $(f,s,i)$
    directly after $(f_{\mathrm{right}},s)$ in $s_1^{i+1}$ and $s_2^{i+1}$. For
    $i=k$, no symbols are added.
  \end{enumerate}
  Note that since the last two steps are applied later, the added
  symbols are ``nearer'' to the state symbols than the symbols added
  in the first two steps. In particular, a common subsequence can
  contain first a symbol added in step~6 added after some $(q,s+1)$, then a
  symbol added after $(q,s)$ in step~5, and then symbols added before
  or after $(q,s)$ in one of the first four steps.
  
  The last rule ensures that when a tuple $(f,s,i)$ is not
  mentioned for a string by one of the first six rules, we can always
  make it part of a common subsequence:
  \begin{enumerate}\setcounter{enumi}{6}
  \item Finally, iterate over the $4k$ strings. For each such string $s_j^i$,
    consider the set $X$ of all $(f,s,i)$ that are not present in
    $s_j^i$. Add all symbols of $X$ once after each letter of $s_j^i$.
  \end{enumerate}
  
  As the last step of the construction of the strings, in order to
  model the initial configuration $q_1\dots q_k$ of the automaton, for
  each $i \in \{1,\dots,k\}$ in $s_1^i$ to $s_4^i$ we remove all
  symbols before $(q_i,1)$.

  \medskip\emph{Correctness: First direction.}
  Having finished the description of the reduction, we now argue that
  it is correct. For this, first assume that the automaton, does,
  indeed, accept the input sequence $q_1\dots q_k$. By assumption,
  this means that the automaton will make $t \cdot k$ sequential
  steps. Assume that in major step~$s$ and minor step~$i$ the
  automaton makes transition $f^{s,i}$, meaning that the $i$th cell
  switches its state from $f^{s,i}_{\mathrm{old}}$ to
  $f^{s,i}_{\mathrm{new}}$.

  We claim that the sequence $(f^{1,1},1,1)\penalty0(f^{1,2},1,2)\dots\penalty0 
  (f^{1,k},1,k)\penalty0 (f^{2,1},2,1) \dots\penalty0
  (f^{t,k},t,k)$ is a common subsequence of all $s_j^i$.
  To see this,
  consider the first symbol $(f^{1,1},1,1)$. It will be present both
  in $s_1^1$ and $s_2^1$ since for the first transition the first cell was
  exactly in state $q_1 = f^{1,1}_{\mathrm{old}}$ and, thus, this
  symbol \emph{followed} $(q_1,1)$ in the construction and was not
  removed in the last construction step. The symbol is also present in
  $s_3^1$ and $s_4^1$, namely right before the (``virtual'') pair
  $(f_{\mathrm{new}},2)$. The symbol will also be present in $s_1^2$
  and $s_2^2$ since $q_2 = f^{1,1}_{\mathrm{right}}$ and we added
  $(f^{1,1},1,1)$ to both $s_1^2$ and $s_2^2$ in step~6. Finally, the
  symbol will be present in all other strings near the beginning
  because of step~7.
  
  Next, consider the second symbol $(f^{1,2},1,2)$, which corresponds to
  the second step the automaton has taken. Here, the second cell
  switches from $f^{1,2}_{\mathrm{old}}$ to  $f^{1,2}_{\mathrm{new}}$
  because the first cell has already switched to
  $f^{1,2}_{\mathrm{left}} = f^{1,1}_{\mathrm{new}}$ during the first
  transition and the third cell is still in $f^{1,2}_{\mathrm{right}}
  = q_3$. Now, observe that in all strings $(f^{1,2},1,2)$ does,
  indeed, come after $(f^{1,1},1,1)$: For $s_1^2$ to $s_4^2$ this is
  because of steps 1 to~4. For $s_3^1$ to $s_4^1$, we have, indeed, 
  $(f^{1,2},1,2)$ following $(f^{1,1},1,1)$ by step~5. For $s_1^3$ to
  $s_2^3$, the symbol $(f^{1,2},1,2)$ is present by step~6. All other
  strings contain the symbol by step~7 near the beginning.

  Continuing in a similar fashion with the other symbols, we see that
  the sequence
  \begin{align*}
    (f^{1,1},1,1)\dots(f^{t,k},t,k)    
  \end{align*}
  is a common subsequence of all
  strings and it clearly has length $t\cdot k$.

  \medskip\emph{Correctness: Second direction.}
  It remains to argue that if there is a common subsequence of the
  strings of length $t\cdot k$, then the automaton accepts the
  input. First observe that the common subsequence must be of the
  form  $(f^{1,1},1,1)\penalty0(f^{1,2},1,2)\dots\penalty0 
  (f^{1,k},1,k)\penalty0 (f^{2,1},2,1) \dots\penalty0
  (f^{t,k},t,k)$. The reason is that for any two symbols
  $(f,s,i)$ and $(f',s',i')$ if $s <s'$ then the first of these
  symbols always comes before the second in all strings. The same is
  true if $s = s'$ and $i < i'$. Finally, for $s=s'$ and $i=i'$, the
  opposite orderings for the symbols in steps  1 and~2 (and, also, in
  steps 3 and~4) ensure that at most one of the two symbols can be
  present in a common subsequence. Thus, the indices stored in the
  symbols of the common subsequence must strictly increase and, since
  the length of the sequence is $t\cdot k$, all possible indices must
  be present.

  We must now argue that the $f^{s,i}$ form a sequence of transitions
  that make the automaton accept. For this, we perform an induction
  on the length of an initial segment up to some symbol
  $(f^{s_0,i_0},s_0,i_0)$ of the common sequence. For each
  cell index~$i$, let $f^i = (f^{s,i},s,i)$ be the last symbol in the
  segment whose last component is $i$. Let $q^i =
  f^i_{\mathrm{new}}$ or, if the segment is so short that there is no
  $f^i$, let $q^i$ be the initial state~$q_i$. The inductive
  claim is that after $(s_0-1) \cdot k + i_0$ steps of the
  automaton, the cells will have reached exactly states
  $q^1,\dots,q^k$. Clearly, this is correct at the start. For the
  inductive step, the crucial observation is that steps 1 to~6
  guarantee that for $i_0 < k$ the only symbol
  $(f^{s_0,i_0+1},s_0,i_0+1)$ that can follow $(f^{s_0,i_0},s_0,i_0)$
  in a common sequence is one that makes that cell $i_0+1$ change its
  state according to the transition $f^{s_0,i_0+1}$. For $i_0 = k$, we
  similarly have that only symbols $(f^{s_0+1,1},s_0+1,1)$ can follow
  that make cell $1$ change its state according to the transition
  $f^{s_0+1,1}$. 
\end{proof}

\section{Conclusion}

Bounded nondeterminism plays a key role in parameterized complexity
theory since it lies at the heart of the definition of important
classes like $\Class W[\Class P]$, but also of $\Class W[1]$. In the
present paper we introduced a ``W-operator'' that cannot only be
applied to $\Class P$, yielding $\Para[W]\Class P$, but also to
classes like $\Class{NL}$ or $\Class{NC}^1$. We showed that 
``union versions'' of problems complete for $\Class P$,
$\Class{NL}$, and $\Class L$ tend to be complete for $\Para[W]\Class
P$, $\Para[W]\Class{NL}$, and $\Para[W]\Class L$. Several important
problems studied in parameterized complexity turn out to be union
problems, including $\PLang{circuit-sat}$ and
$\PLang{weighted-sat}$, and we could show that the latter problem is
complete for $\Para[W]\Class{NC}^1$. For the associative generability problem
$\PLang{agen}$, which is also a union problem, we established its
$\Para[W]\Class{NL}$-completeness. An interesting open problem is 
determining the complexity of the ``universal'' version of
$\Lang{agen}$, where the question is whether \emph{all}
size-$k$ subsets of the universe are generators. Possibly, this
problem is complete for $\Para[W_\forall]\Class{NL}$. 

We showed that different problems are complete for
the time--space class $\Class N[f \operatorname{poly}, f\log]$.
We shied away from presenting complete problem for the classes
$\Class D[n^f, f \operatorname{poly}]$ and $\Class N[n^f,\penalty0 f
\operatorname{poly}]$ because in their definition we need
restrictions like ``the machine may make at most $n^k$ steps where $k$
is the parameter.'' Such artificial parameterizations have been
studied, though: In \cite[Theorem 2.25]{FlumG2006} Flum and Grohe show that
``$\PLang{exp-dtm-halt}$'' is complete for $\Class{XP}$. Adding a
unary upper bound on the number of steps to the definition of the
problem yields a problem easily seen to be complete for $\Class D[n^f,
f \operatorname{poly}]$. Finding a \emph{natural} problem complete for
the latter class is, however, an open problem. 

\subsubsection*{Acknowledgements.} We would like to thank Michael
Elberfeld for helping us with the proof of Theorem~\ref{theorem:agen}.

\bibliographystyle{plain}
\bibliography{main}

\end{document}